\documentclass[11pt]{article}

\usepackage[margin=0.9in]{geometry}
\usepackage{amsmath,amssymb,amsthm,mathtools}
\usepackage{booktabs}
\usepackage{xcolor}
\usepackage{enumitem}
\usepackage{algorithm}
\usepackage{algpseudocode}
\algrenewcommand{\algorithmicrequire}{\textbf{Input:}}
\algrenewcommand{\algorithmicensure}{\textbf{Output:}}
\usepackage{microtype}
\usepackage{needspace}
\usepackage[hidelinks]{hyperref}
\usepackage{url}

\allowdisplaybreaks

\newtheorem{theorem}{Theorem}[section]
\newtheorem{lemma}{Lemma}[section]

\newtheorem{definition}{Definition}[section]

\theoremstyle{remark}

\DeclareMathOperator{\OPT}{OPT}
\DeclareMathOperator{\rank}{rank}
\newcommand{\cl}{\operatorname{cl}}
\newcommand{\revision}[1]{#1}

\title{A Refined Analysis for Matroid Secretary with Submodular Objectives}
\author{
Dennis Joyce
}
\date{}

\begin{document}
\maketitle
\begingroup
\renewcommand{\thefootnote}{}
\begin{NoHyper}
\footnotetext{East Texas A\&M University. djoyce2@leomail.etamu.edu}
\end{NoHyper}
\endgroup

\begin{abstract}
We study the matroid secretary problem with a nonnegative monotone submodular objective. Elements arrive in uniformly random order, and every acceptance decision is immediate and irrevocable. We give an $8.699$-competitive algorithm for arbitrary matroids.
 The algorithm first rejects a randomly sized initial segment of the arrival sequence as a learning set and runs submodular greedy on all elements in this set. Each later element receives a fixed weight equal to the marginal value it would have when inserted into the stored greedy sequence.
  On these fixed weights, the algorithm runs a linear-objective matroid secretary subroutine.
The algorithm uses $O(nr)$ value queries and $O(n^2)$ independence queries, where $r$ is the matroid rank.
\end{abstract}

\section{Introduction}

The matroid secretary problem models online selection under an independence constraint. 
The elements of the ground set are revealed one by one in a uniformly random order. 
When an element arrives, the algorithm may make value and independence queries involving the elements revealed so far and must immediately and irrevocably decide whether to accept it. The accepted elements must remain independent in the matroid. In the classical linear version, every element has a fixed weight and the objective is the sum of the accepted weights. 
Babaioff, Immorlica, and Kleinberg~\cite{babaioff2007} introduced this problem and gave an $O(\log r)$-competitive algorithm for arbitrary matroids, where $r$ is the matroid rank. 
Constant-competitive algorithms were obtained only recently~\cite{abdi2026,clw2026,huang2026,singla2026}. 

The submodular version replaces the linear objective by a set function with diminishing returns.
A set function $f$  is \emph{submodular} if the marginal value of an element can only decrease as the current set grows. The resulting online problem is the submodular matroid secretary problem. 
Feldman and Zenklusen~\cite{feldmanzenklusen2018} gave a black-box reduction that transforms an $\alpha$-competitive algorithm for the linear problem into an $8\alpha(\alpha+1)$-competitive algorithm for the monotone submodular problem. In particular, when the linear algorithm is $\alpha$-probability-competitive (i.e., every element in a fixed optimal offline solution has probability at least $1/\alpha$ of being accepted), their guarantee improves to $16\alpha$. 
Combining the latter reduction with the current $e$-probability-competitive linear algorithm for arbitrary matroids by Huang~\cite{huang2026} gives a $16e\approx 43.493$-competitive algorithm for the submodular version.

We obtain an $8.699$-competitive algorithm for the submodular matroid secretary problem by instantiating the reduction in~\cite{feldmanzenklusen2018} and giving a refined joint analysis. 
\begin{theorem}[Main theorem]\label{thm:main-result}
There exists a polynomial-time randomized online algorithm for the monotone submodular matroid secretary problem on arbitrary matroids that is $C$-competitive, where
\[
C=\frac{1559}{299(1-e^{-47/23})}+\frac{474}{175}
=8.6988177308\ldots.
\]
On a ground set of $n$ elements with matroid rank $r$, the algorithm uses $O(nr)$ value queries and $O(n^2)$ independence queries.
\end{theorem}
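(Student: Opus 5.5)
We will prove Theorem~\ref{thm:main-result} by analysing the algorithm from the abstract directly, rather than using the reduction of~\cite{feldmanzenklusen2018} as a black box. First, draw a random threshold $p$ from a carefully chosen distribution on $(0,1)$. Declare the first $\mathrm{Bin}(n,p)$ arrivals, which form a uniformly random $p$-sample $L$, to be the learning set, and reject all of them. Run submodular greedy on $L$ under the matroid constraint to obtain a sequence $g_1,\dots,g_k$. Each later element $e$ gets the fixed weight
\[
w(e)=\max_i\bigl(f(e\mid G_{i-1})\bigr)^+,\qquad G_{i-1}=\{g_1,\dots,g_{i-1}\},
\]
restricted to positions $i$ where $G_{i-1}+e$ is independent. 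We then run Huang's $e$-probability-competitive linear algorithm~\cite{huang2026} on the remaining elements with weights $w$. Because we condition on $L$, these weights are fixed and the remaining elements still arrive in uniformly random order. The linear algorithm therefore accepts each element of the weighted optimum $\OPT_w$ of the residual instance with probability at least $1/e$, up to the time-rescaling caused by removing the prefix. The query bounds follow directly: greedy and the weight computation use $O(nr)$ value queries, and independence tests against the prefixes of the greedy chain together with the linear subroutine use $O(n^2)$ independence queries.

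\textbf{Step 1: lower-bound the weighted optimum.} We show that $\mathbb{E}[w(\OPT_w)]$ is comparable to $f(\OPT)$ minus $f(G)$. For each $o\in\OPT\setminus L$, we match $o$ to the greedy step at which it would have been rejected, following the standard exchange argument for greedy on matroids. Since $o$ is not in $L$, it is distributed like a fresh sample, and greedy on $L\cup\{o\}$ is a coupled process. This lets us write $\mathbb{E}[\sum_o w(o)]\ge (1-p)\bigl(f(\OPT)-\mathbb{E}f(G)\bigr)$, with a refinement that uses the greedy chain's $1/2$ guarantee on $L$.

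\textbf{Step 2: lower-bound the value of the accepted set.} Let $S$ be the accepted set. We show $f(S)\ge \sum_{s\in S} w(s)-(\text{correction})$, where the correction is charged to $f(G)$. Each accepted element's weight is a marginal value with respect to a greedy prefix $G_{i-1}$, and by submodularity the loss from mixing accepted elements with prefixes is at most the greedy value. We handle this through the inequality $f(S\cup G)\ge f(G)+\sum_{s} f(s\mid G\cup S_{<s})$, combined with monotonicity.

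\textbf{Step 3: average over $p$ and optimise.} Combining Steps 1 and 2 gives a bound $\mathbb{E}f(S)\ge a(p)f(\OPT)-b(p)\mathbb{E}f(G)$, together with the trivial fact that $\mathbb{E}f(G)\le f(\OPT)$. The $1/e$ probability from Huang's algorithm, applied on the remaining time horizon $[p,1]$, yields factors like $\int (1-e^{-c t})\,dt$. This is where the term $1-e^{-47/23}$ comes from, for a density over $p$ with rational parameters such as $47/23$. We then solve the resulting two-variable linear program over the ratio $\mathbb{E}f(G)/f(\OPT)$ to obtain $C$.

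The main obstacle is Step 2: bounding the true submodular value of $S$ against the fixed weights without losing the factor-$2$-type slack that the black-box reduction loses. We will first try a charging argument in which each accepted element is charged to the greedy prefix used to define its weight, with $f(G)$ absorbing the overlap.
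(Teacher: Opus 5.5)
\textbf{There is a genuine gap.} Several of your concrete choices would fail.

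\textbf{The weights collapse.} Your weight degenerates. Since $G_0=\varnothing$ and marginals are nonincreasing along the chain by submodularity, $\max_i (f(e\mid G_{i-1}))^+$ over feasible prefixes is attained at $i=1$. So $w(e)=f(\{e\})$ for every non-loop, and the weights do not depend on the learning set at all. Singleton weights cannot give a constant ratio. Take a rank-$r$ uniform matroid with many elements all covering the same set of value $1+\epsilon$, and $r$ elements covering disjoint sets of value $1$. The weighted optimum is then filled with redundant copies worth only $1+\epsilon$ together, while $f(O)\ge r$. What you need is the frozen weight $w_e=f(e\mid M_e)$, where $M_e$ is the prefix at which greedy run on $M+e$ would select $e$.

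\textbf{Huang's guarantee cannot give the stated constant.} The only property of Huang's algorithm you use is that each element of $\OPT_{\mathbf w}$ is accepted with probability at least $1/e$. That is exactly the black-box hypothesis of the Feldman--Zenklusen reduction, which yields $16e\approx 43.5$, and your Steps~1--3 do not say how to beat it. Your account of the constant is also wrong: there is no averaging over a random $p$. The factor $1-e^{-47/23}$ is the Dynamic Thinning retention probability $\gamma=1-\exp(-(1-s)/s)$ at the fixed value $s=23/70$. With fixed $p=5/18$ and $\beta=(1-p)s=299/1260$, the constant is $C=(1+\beta)/(\beta\gamma)+1+2\beta/p$. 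Step~3 therefore cannot produce the stated $C$ from Huang's algorithm. Separately, Huang's algorithm is only known to have expected polynomial query complexity, so your worst-case $O(n^2)$ independence-query bound is unjustified. The paper relies on the worst-case $O(N^2)$ bound of Dynamic Thinning.

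\textbf{The missing idea: use the exact output law.} You need the exact output law of the linear subroutine, not a per-element acceptance bound. Dynamic Thinning outputs an independent $\gamma$-thinning of $\OPT_{\mathbf w}(X)$ for a Bernoulli-$s$ sample $X$. Hence the whole algorithm can be simulated by a single greedy scan in decreasing $w_e=f(e\mid M)$. In that scan, each element's color (learning with probability $p$, reference with probability $\beta$, ignored otherwise) and its retention bit are revealed only after the element is chosen. At every step, the probabilities of entering $M$ and of entering the output $A$ are then exactly $p$ and $\beta\gamma I_e$, with $w_e$ and $u_e=f(e\mid M\cup A)$ already fixed.

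\textbf{How this closes Step~2.} This is what settles your Step~2, which you leave as an unspecified charging argument. The nonnegative potential $\Phi=f(M\cup A)-f(M)$ has expected increment $p(u_e-w_e)+\beta\gamma I_eu_e$. This gives $p\,\mathbb E[\sum (w_e-u_e)]\le\beta\gamma\,\mathbb E[\sum I_eu_e]$, which avoids the factor-$2$-type slack you were worried about.

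\textbf{Step~1 becomes a joint bound.} The paper replaces Step~1 with a joint bound built from two ingredients:
\begin{itemize}
\item the residual potential $f(M\cup A\cup Q)-f(M\cup A)$, with $Q\subseteq O$ maintained by exchanges;
\item a rank charge pairing internal learning steps with earlier external reference steps.
\end{itemize}
Together these bound $f(O)$ by $\mathbb E[\sum I_eu_e]$ and $\mathbb E[\sum (w_e-u_e)]$.

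\textbf{Why your decomposition loses.} You separate $\mathbb E f(G)$ from the weight of the accepted set and then optimize over the ratio $\mathbb E f(G)/f(\OPT)$. That never ties the learning phase and the output together step by step, and that step-level coupling is exactly where the improvement over the black-box reduction comes from.
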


The algorithm uses two existing ingredients and a new joint analysis.
\revision{First, we use the online framework of the monotone simulated-greedy reduction of Feldman and Zenklusen~\cite{feldmanzenklusen2018}. We reject a random learning sample and run submodular greedy on it. The learning sample fixes a greedy sequence against which each later element is assigned a \emph{frozen weight}: the marginal value at which that element would enter the stored greedy sequence if it were added to the learning sample. Conditional on the learning sample, these weights are fixed before the remaining random order is exposed. The broader idea of defining weights from an initial sample appeared earlier in the work of Ma, Tang, and Wang~\cite{ma2013}.}

Second, we instantiate the reduction \cite{feldmanzenklusen2018} with the Dynamic Thinning algorithm. Dynamic Thinning, due to Chan, Lin, and Wang \cite{clw2026}, is a recent $3.1462$-probability-competitive algorithm for the linear-objective matroid secretary problem.
 Informally, its output has the same distribution as the following simple procedure: sample each element independently, run fixed-weight matroid greedy on the sampled elements, and then retain each greedy-selected element independently with a fixed probability. A precise description is given in Section~\ref{sec:DT}.
We run Dynamic Thinning on the post-learning elements with their frozen weights, and return the accepted elements.
 
\revision{The main contribution is the joint analysis of this instantiation. Note that the direction application of the reduction with Dynamic Thinning gives only a $16\cdot 3.1462\approx 50.339$-competitive guarantee for the submodular problem.
To prove the 8.699-competitive ratio, we first give an equivalent probabilistic description of the online output and conduct all subsequent analysis within this description. Whereas Feldman and Zenklusen use black-box guarantees for the linear-objective algorithm,
 we use its exact output
distribution to analyze jointly the value captured during the learning phase and the value contributed
by the final online output. This avoids some of the loss inherent in a black-box composition and yields
the improved guarantee.}
 The guarantee holds for every finite ground set and every matroid rank, and the algorithm is polynomial-time and does not require exact offline submodular optimization.

\subsection{Related work}\label{subsec:related}

The matroid secretary problem has been studied since the work of Babaioff et al.~\cite{babaioff2007,babaiokempe2018}. Earlier general guarantees depended on the matroid rank~\cite{chakraborty2012,lachish2014,feldman2018}. Recent work established constant guarantees for arbitrary matroids. Singla~\cite{singla2026} gave a $4$-competitive algorithm. In concurrent work, Abdi et al.~\cite{abdi2026} and B\'erczi et al.~\cite{berczi2026} proved the strong matroid secretary conjecture for linear matroids, both giving $e$-competitive ordinal algorithms. Huang~\cite{huang2026} subsequently obtained the exact factor $e$ for arbitrary matroids with expected polynomial query complexity.

\revision{Under a cardinality constraint, the submodular $k$-secretary problem asks the algorithm to select at most $k$ elements. This case was independently studied by Bateni, Hajiaghayi, and Zadimoghaddam~\cite{bateni2013} and Gupta et al.~\cite{gupta2010}. For monotone objectives, \cite{bateni2013} gave a $7/(1-1/e)\approx11.073$-competitive algorithm, and Feldman, Naor, and Schwartz~\cite{feldmannaorschwartz2011} improved the ratio to $(e^2+e)/(e-1)\approx5.882$. Given an offline algorithm that returns at least an $\alpha$ fraction of the cardinality-constrained optimum, where $0<\alpha\le1$, Kesselheim and T\"onnis~\cite{kesselheim2017} obtained a competitive ratio of at most $3.226/\alpha$ for every $k\ge2$, approaching $e/\alpha$ as $k$ grows. Their analysis specialized to greedy gives the stronger asymptotic ratio $3.632$ as $k\to\infty$. For nonnegative objectives that need not be monotone, Gupta et al. \cite{gupta2010} gave a constant competitive ratio, and Fujii~\cite{fujii2019} obtained the ratio $e^2(1+e)/(e-1)^2\approx9.305$.}

For the submodular matroid secretary under special matroid constraints, Gupta et al.~\cite{gupta2010} gave constant competitive ratios for non-monotone objectives under partition matroids. For monotone objectives, Ma, Tang, and Wang~\cite{ma2013} gave a $211$-competitive algorithm for laminar matroids and a $95$-competitive algorithm for transversal matroids. Feldman and Zenklusen~\cite{feldmanzenklusen2018} improved the laminar ratio to $144$. Kesselheim and T\"onnis~\cite{kesselheim2017} further improved submodular secretary matching. 

\medskip The rest of the paper is organized as follows. Section~\ref{sec:prelim} defines the model. Section~\ref{sec:algorithm} gives the online algorithm. Section~\ref{sec:process} gives an equivalent random process used only in the analysis. Section~\ref{sec:analysis} proves the competitive and query guarantees. The appendix gives improved competitive guarantees for laminar matroids and graphic matroids. 

\section{Preliminaries}\label{sec:prelim}

\subsection{Matroids and monotone submodularity}

Let $(E,\mathcal I)$ be a matroid on a ground set $E$ of $n$ elements. A set in $\mathcal I$ is called \emph{independent}. For $S\subseteq E$, let $\rank(S)$ be the maximum size of an independent subset of $S$, and let $r=\rank(E)$ be the matroid rank. For a set $S$ and an element $e$, write $S+e=S\cup\{e\}$. The \emph{closure} of $S$ is
\[
\cl(S)=\{e\in E:\rank(S+e)=\rank(S)\}.
\]
When $e\in\cl(S)$, we also say that $S$ \emph{spans} $e$. We use an independence oracle: given a set of elements, the oracle reports whether that set belongs to $\mathcal I$.

A function $f:2^E\to\mathbb R_{\ge0}$ is \emph{monotone} if
\(
f(S)\le f(T)
\) for any $S\subseteq T\subseteq E$,
and is \emph{submodular} if
\[
f(e\mid S)\ge f(e\mid T)
\qquad(S\subseteq T,\ e\notin T),
\]
where
\(
f(e\mid S)=f(S+e)-f(S)
\)
is the \emph{marginal value} of $e$ with respect to $S$. We assume $f(\varnothing)=0$. 

In the secretary setting, the elements arrive in a uniformly random order.
 A \emph{value query} on an observed set $S$ returns $f(S)$. When an element arrives, 
 the algorithm may make value queries and independence queries using only elements that have already arrived. It must then accept or reject the element immediately.
  Rejected elements cannot later be accepted. The accepted set must remain independent in $(E,\mathcal I)$.

For a monotone submodular function $f$, let
\(
O\in\arg\max\{f(S):S\in\mathcal I\}
\)
be a fixed optimal independent set. An online algorithm is $C$-\emph{competitive} if
\[
\mathbb E[f(A)]\ge \frac1C f(O),
\]
where $A$ is its random accepted set.

We fix a label order, independent of the random arrivals, to break all ties consistently. For a fixed nonnegative weight vector $\mathbf w=(w_e)_{e\in E}$, write $\OPT_{\mathbf w}(S)$ for the greedy set obtained by scanning the positive-weight elements of $S$ in decreasing weight-and-label order and adding each element that preserves independence. That is, $\OPT_{\mathbf w}(S)$ is a maximum-weight independent subset of $S$. 

\subsection{Submodular greedy}

For a set $S\subseteq E$, \emph{submodular greedy} starts with the empty set and repeatedly adds a feasible element of maximum positive current marginal value in terms of $f$. Ties are broken by the fixed label order. It stops when no feasible element has positive marginal value.

If submodular greedy chooses the ordered sequence
\[
m_1,m_2,\ldots,m_k,
\]
we write
\[
M_i=\{m_1,\ldots,m_i\},\qquad M_0=\varnothing,
\]
and call the sets $M_0,M_1,\ldots,M_k$ the \emph{greedy prefixes}. 

The distinction between the two greedy rules is important. Submodular greedy chooses an element according to its current marginal value, which changes as the chosen set grows. The notation $\OPT_{\mathbf w}(S)$ refers instead to ordinary matroid greedy under a fixed weight vector $\mathbf w$, which is a linear objective.

\subsection{Dynamic Thinning}\label{sec:DT}

We use Dynamic Thinning~\cite{clw2026} as a linear-objective matroid secretary subroutine. We only need its online implementability and its exact terminal distribution, which we restate here.

Fix a weighted matroid secretary instance with $N$ elements and a parameter $s\in(0,1)$. \revision{Draw $N$ independent uniform random variables in $(0,1)$,  and assign them in increasing order to successive arrivals as the auxiliary times.} Elements with auxiliary time at most $s$ form an initial rejected sample and initialize a \emph{reference set} $X$. The reference set is a subset of arrived elements and is different from the actual accepted set $A$.

At a later auxiliary time $t>s$, let $B=\OPT_{\mathbf w}(X)$ and $B^+=\OPT_{\mathbf w}(X+e)$ for the arriving element $e$. The algorithm acts as follows. If $e\notin B^+$, it rejects $e$. If $B^+=B+e$, it accepts $e$ with probability
\[
q(t)=\exp\left(-\frac{t-s}{s}\right).
\]
Otherwise there is a unique $b\in B$ such that $B^+=B-b+e$. In this case the algorithm accepts $e$ exactly when $b\notin A$. Every accepted arrival is inserted into both $X$ and $A$. Between two consecutive arrivals, every element of $B\setminus A$ has an independent exponential deletion clock of rate $1/s$; when its clock rings, that element is removed from $X$. These internal deletions change only the reference set and never revoke an acceptance. \revision{After each deletion, recompute $B=\OPT_{\mathbf w}(X)$ and restart independent rate-$1/s$ clocks for the elements of the current $B\setminus A$. Recompute $B$ and restart the clocks after each arrival as well. After the last arrival, continue internal deletions until time $1$.} Return $A$. 

The following result is the property we use in the submodular analysis.

\begin{lemma}[Output distribution of Dynamic Thinning~\cite{clw2026}]\label{lem:dt-law}
Fix a weighted matroid secretary instance and $s\in(0,1)$. Let $X$ be a random subset that contains every element independently with probability $s$.
Set
\begin{equation}\label{eq:gamma}
\gamma=1-\exp\left(-\frac{1-s}{s}\right).
\end{equation}
Conditional on $X$, retain every element of $\OPT_{\mathbf w}(X)$ independently with probability $\gamma$. The resulting random set has the same distribution as the online output $A$ of Dynamic Thinning with parameter $s$.
\end{lemma}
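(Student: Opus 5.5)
We prove Lemma~\ref{lem:dt-law} by a continuous-time coupling. The target distribution is a single uniform random time $T_e\in(0,1)$ for each element, with the reference sample consisting of the elements with $T_e\le s$, followed by independent $\gamma$-thinning of $\OPT_{\mathbf w}(X)$. The goal is to show that the online process produces the same law.

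First we reformulate the target law as a time-indexed process. Give every element an independent auxiliary time $T_e$, exactly as the algorithm does. Next give every element an independent ``survival'' mark $Z_e\sim\mathrm{Exp}(1/s)$. Declare an element of the final greedy set retained if the appropriate exponential clock has not run out. The key identity is
\[
\Pr[\text{retained}] = 1-\exp\left(-\tfrac{1-s}{s}\right)=\gamma .
\]
The combined effect of the acceptance probability $q(t)=e^{-(t-s)/s}$ and the deletion clocks of rate $1/s$ on $(t,1]$ should reproduce this $\gamma$. Concretely, the probability that an element either is accepted at its arrival, or later ``becomes'' accepted through an exchange, integrates to $\gamma$.

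The core step is an inductive invariant over time $t\in[s,1]$. Let $X_t$ be the current reference set and $A_t$ the accepted set. We claim that, conditional on the set of arrived elements $R_t$, the law of the pair $(\OPT_{\mathbf w}(X_t), A_t)$ has the following description. First, $X_t$ is distributed as an independent $s/t$-subsample of $R_t$; deletions at rate $1/s$ are exactly what keeps the thinning fraction at $s/t$ as $t$ grows, since $\frac{d}{dt}\log(s/t)=-1/t$ after re-weighting by arrival density. Second, $A_t$ is obtained by keeping each element of $\OPT_{\mathbf w}(X_t)$ independently with probability $1-e^{-(t-s)/s}$, and the elements of $B\setminus A$ are exactly the ones exposed to deletion.

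We verify the invariant under the two kinds of events.
\begin{enumerate}[label=(\roman*)]
\item \emph{Arrival of $e$.} If $e\notin B^+$, nothing changes. If $B^+=B+e$, accepting with probability $q(t)$ matches a fresh element whose exposure has run for time $t-s$. If $B^+=B-b+e$, the new element $e$ inherits $b$'s status: it is accepted iff $b\notin A$, and otherwise $b$'s ``retained'' slot is passed on. Here we use the matroid exchange property, that $b$ is unique, together with the memorylessness of exponential clocks.
\item \emph{Deletion of some $x\in B\setminus A$.} Removing $x$ from $X$ lets $\OPT_{\mathbf w}$ recompute. We must show that the resulting pair again has the invariant law, using that for weighted matroid greedy the removal of a basis element with a random replacement preserves the product structure.
\end{enumerate}
At $t=1$, $X_1$ is an $s$-subsample of $E$ and the retention probability is $1-e^{-(1-s)/s}=\gamma$, which gives exactly the law stated in Lemma~\ref{lem:dt-law}.

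The main obstacle is case (ii) and the exchange subcase of case (i). Proving that the conditional product structure survives recomputation of $\OPT_{\mathbf w}$ requires a careful exchange argument. I would first try to reduce both cases to a single-element statement: for a fixed $X$ and a fixed $y$, the greedy basis changes by at most one swap. A complementary-sampling argument, which tracks each element's ``time of becoming unspanned by heavier elements,'' should then show that each element's membership in $A$ depends only on its own clock. Since Lemma~\ref{lem:dt-law} is cited from~\cite{clw2026}, it is also acceptable to invoke their theorem directly once the process described in Section~\ref{sec:DT} is matched with theirs.
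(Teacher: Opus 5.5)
\paragraph{How this compares with the paper.} The paper does not prove this lemma. It imports it from Chan, Lin, and Wang, so your closing fallback (match the process of Section~\ref{sec:DT} with theirs and cite) is exactly what the paper does. Your self-contained argument, however, has a genuine gap at its only nontrivial step.

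\paragraph{The gap.} Your invariant is correct if it is read as a statement about the pair $(X_t,A_t)$, as your first clause says. Conditional on $R_t$, $X_t$ is an independent $\sigma$-sample of $R_t$ with $\sigma=s/t$, and $A_t$ is an independent $\rho$-thinning of $B=\OPT_{\mathbf w}(X_t)$ with $\rho=1-e^{-(t-s)/s}$. (A statement about $\OPT_{\mathbf w}(X_t)$ alone is not closed under the dynamics, since deletions expose $X\setminus B$.) But you never prove this invariant, and your plan to verify it separately under (i) and (ii) cannot work. The product family is preserved only by the combined infinitesimal evolution, in which arrival and deletion terms cancel against each other, not by either event type on its own. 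For example, an arrival with $B^+=B+e$ puts $e$ into $A$ with probability $q(t)=1-\rho$, not $\rho$.

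The heuristics offered in place of a proof do not hold up either:
\begin{itemize}
\item Rate-$1/s$ deletions acting only on $B\setminus A$ do not by themselves produce the $-1/t$ decay of $s/t$.
\item In the exchange subcase $e$ receives the opposite of $b$'s status rather than inheriting it.
\item A rate-$1/s$ clock that ``has not run out'' by time $1$ has probability $e^{-(1-s)/s}=1-\gamma$, not $\gamma$.
\end{itemize}

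\paragraph{How to close it.} Write the forward equation conditional on the unordered set $R=R_{t+dt}$ with $|R|=k$. With probability $1-k\,dt/t$ nothing arrives in $(t,t+dt]$, and the time-$t$ state has law $\mu_t(R)$. For each $e\in R$, with probability $dt/t$ the element $e$ is the new arrival, and the time-$t$ state has law $\mu_t(R-e)$. Take $\mu(X,A)=\sigma^{|X|}(1-\sigma)^{k-|X|}\rho^{|A|}(1-\rho)^{|B|-|A|}$ and check two things.
\begin{itemize}
\item For each $e\in A$, the inflow from accepting $e$ in state $(X-e,A-e)$ equals $\frac{1-\rho}{s\rho}\mu$. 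This holds both when $e\notin\cl(X-e)$ (acceptance probability $q$) and in the exchange case (acceptance probability $1$, offset by one extra factor $1-\rho$). It is exactly the $\partial_t\rho^{|A|}$ term.
\item For each $y\in R\setminus X$, the deletion inflow from $(X+y,A)$ and the inflow from rejecting an arriving $y$ in state $(X,A)$ carry complementary weights: $1-\rho$ versus $\rho$ if $y\notin\cl(X)$; $\mathbf 1[b'\notin A]$ versus $\mathbf 1[b'\in A]$ if $\OPT_{\mathbf w}(X+y)=B-b'+y$; and $0$ versus $1$ if $y$ is spanned by heavier elements. Using $\sigma/s=1/t$, they sum to $\mu/(t(1-\sigma))$.
\end{itemize}
Together with the terms $-k\mu/t$ and $-|B\setminus A|\mu/s$, this reproduces $\partial_t\mu$ exactly.

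The only matroid facts needed are this trichotomy and the fact that deleting an element never evicts other members of $\OPT_{\mathbf w}$. No lemma about product structure surviving recomputation is required. Induction on $|R|$, uniqueness for the resulting linear ODE, and the initial law $\mu_s(R)=\delta_{(R,\varnothing)}$ then give the invariant, and $t=1$ gives the lemma.
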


An \emph{independent $\gamma$-thinning} of a set means the random subset obtained by retaining each member independently with probability $\gamma$. Lemma~\ref{lem:dt-law} says that Dynamic Thinning outputs an independent $\gamma$-thinning of $\OPT_{\mathbf w}(X)$, where $X$ is the random reference set.
We also use its query bound as follows.

\begin{lemma}[Query bound of Dynamic Thinning~\cite{clw2026}]\label{lem:dt-query}
On an instance with $N$ elements, Dynamic Thinning uses $O(N^2)$ independence queries in the worst case. Every queried set contains only elements that have already arrived.
\end{lemma}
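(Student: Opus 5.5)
The plan is to follow the description of Dynamic Thinning in Section~\ref{sec:DT} step by step and count only independence queries. Every query the algorithm makes is of the form ``compute $\OPT_{\mathbf w}(Y)$'' for a set $Y$ of arrived elements: the reference set $X$, or $X+e$ for the element $e$ that has just arrived. The case distinctions ($e\notin B^+$, $B^+=B+e$, or $B^+=B-b+e$) are read off from $B$ and $B^+$ with no further queries. So the work is to bound how many times $B$ is recomputed and how expensive each recomputation is. The locality claim is immediate from the same observation: $X$ is, at every moment, a set of already-arrived elements. It starts as the initial sample, and afterwards it only gains the current arrival or loses elements through deletions.

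The first step is to make each recomputation cheap. I would maintain $B=\OPT_{\mathbf w}(X)$ incrementally rather than rebuilding it from scratch.
\begin{itemize}
\item \emph{Insertion of $e$.} By the standard exchange property of max-weight bases, $\OPT_{\mathbf w}(X+e)$ is either $B$, or $B+e$, or $B-b+e$, where $b$ is the minimum-weight element of the fundamental circuit of $e$ in $B$. This costs $O(1)$ queries to test whether $B+e\in\mathcal I$, plus $O(|B|)$ queries to test $B+e-b'$ for each $b'\in B$.
\item \emph{Deletion of some $b\in B\setminus A$ from $X$.} The new greedy basis is $B-b+x$, where $x$ is the best element of $X\setminus B$ that restores independence, if one exists. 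Finding $x$ takes $O(|X|)$ queries.
\end{itemize}
Alternatively, one can simply rerun greedy from scratch on $X$, which costs $|X|\le N$ queries per recomputation.

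The second step counts recomputation events. There are at most $N$ arrivals. Each deletion permanently removes an element from $X$. An element can enter $X$ only once: either it is in the initial sample, or it is accepted at its arrival. Hence there are at most $N$ deletions in total, and at most $2N$ recomputations.

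Combining the two steps with the cost of $O(N)$ per recomputation gives $O(N)\cdot O(N)=O(N^2)$ independence queries in the worst case, uniformly over the random coins. The computation of the initial $B$ on the sample costs $O(N)$ queries and does not change the bound.

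The main obstacle is the counting of recomputation events. Exponential clocks are restarted after every event, so one must rule out unboundedly many "events" between arrivals. The point to verify is that every clock ring removes an element from $X$ for good. Restarting clocks does not itself trigger a recomputation, so the event count is bounded deterministically by $2N$ and does not merely hold in expectation. The continuation of deletions after the last arrival, up to time $1$, is covered by the same bound of at most $N$ deletions.
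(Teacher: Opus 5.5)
Your argument is correct, but there is nothing in the paper to compare it against: the paper does not prove this lemma. It imports the bound verbatim from Chan, Lin, and Wang~\cite{clw2026}, so the citation is the paper's only justification.

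You instead derive the bound directly from the description restated in Section~\ref{sec:DT}, and the derivation holds. The key points are these.
\begin{itemize}
\item There are at most $N$ arrivals.
\item Every clock ring permanently removes an element from $X$, and each element enters $X$ at most once. In fact only initial-sample elements can ever be deleted, since $B\setminus A$ never contains an accepted arrival. So there are at most $N$ deletions on every realization.
\item Each event costs at most $|X|+1\le N+1$ independence queries by rerunning fixed-weight greedy. The refresh of $B$ after an arrival is free, because it equals $B^+$ if $e$ is accepted and the old $B$ otherwise.
\item Every queried set is a subset of $X+e$, which consists only of arrived elements. This gives the locality claim.
\end{itemize}

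What your route buys is a self-contained, deterministic worst-case count that can be checked against the paper's own description. What it does not certify is that this description faithfully matches the implementation in~\cite{clw2026}; the citation covers that.

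The incremental exchange-based maintenance is unnecessary for the stated bound, since deletions still cost $O(|X|)$ queries each, so you can simply drop it. If you keep it, note that the replacement $x$ for a deleted $b$ must have positive weight and be maximal in the weight-and-label order. Otherwise $B-b+x$ need not equal $\OPT_{\mathbf w}(X-b)$.
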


\section{The Online Algorithm}\label{sec:algorithm}

\revision{Algorithm~\ref{alg:main} instantiates the reduction of Feldman and Zenklusen~\cite{feldmanzenklusen2018} with Dynamic Thinning as the linear secretary subroutine.}
The algorithm has a learning phase followed by a fixed-weight (linear) secretary phase. The learning phase is used only to define the weights. All learning-phase elements are rejected.

Fix two parameters $p,s\in(0,1)$. Draw
\[
K\sim\operatorname{Bin}(n,p)
\]
independently of the arrival order. Reject the first $K$ arrivals and denote their set by $L$. Equivalently, $L$ is an independent Bernoulli-$p$ sample of $E$.

Run submodular greedy on $L$. Let $m_1,\ldots,m_k$ be the elements it selects in order, let
\[
M=\{m_1,\ldots,m_k\},
\]
and let $M_i=\{m_1,\ldots,m_i\}$ be the greedy prefixes. Store this ordered sequence and its prefixes.

Consider an element $e\notin L$. Its \emph{frozen weight} $w_e$ is defined by running submodular greedy on $M+e$. If this greedy run never selects $e$ with positive marginal value, set $w_e=0$. Otherwise, let $M_e$ be the greedy prefix present immediately before $e$ is selected and set
\begin{equation}\label{eq:frozen-weight}
w_e=f(e\mid M_e).
\end{equation}
The weight is called frozen because, once $L$ is fixed, it is determined by $L$ and $e$ alone. It is independent of the order of the later arrivals and of all later online decisions.

After the learning phase, run one Dynamic Thinning instance of parameter $s$ on the remaining $n-K$ arrivals using the fixed weights $w_e$. Elements with weight zero stay in the stream with weight zero and are never selected by the positive-weight linear greedy rule. The actual accepted set of Dynamic Thinning is the output of our algorithm.

\begin{algorithm}[t]
\caption{Monotone Submodular Dynamic Thinning}\label{alg:main}
\begin{algorithmic}[1]
\Require Random-order stream of $n$ elements, independence oracle, value oracle, parameters $p,s\in(0,1)$.
\Ensure An independent accepted set $A$.
\State Draw $K\sim\operatorname{Bin}(n,p)$.
\State Reject the first $K$ arrivals and call their set $L$.
\State Run submodular greedy on $L$, obtaining the ordered set $M$ and all greedy prefixes.
\State Initialize one Dynamic Thinning instance of parameter $s$ for the $n-K$ post-learning arrivals.
\For{each later arrival $e$}
    \State Compute its frozen weight $w_e$ from greedy on $M+e$ as in~\eqref{eq:frozen-weight}.
    \State Feed $e$, with weight $w_e$, to this Dynamic Thinning instance.
    \State Make the same accept/reject decision as this Dynamic Thinning instance.
\EndFor
\State \Return the elements accepted by Dynamic Thinning.
\end{algorithmic}
\end{algorithm}

Algorithm~\ref{alg:main} is online. The phrase ``run greedy on $M+e$'' uses only the stored learning sample and the currently arriving element. The inner Dynamic Thinning instance uses fresh internal randomness and treats the post-learning stream as its entire input.

\begin{lemma}\label{lem:fixed-weights}
Conditional on the learning sample $L$, the weights $(w_e)_{e\in E\setminus L}$ are fixed before the random order of $E\setminus L$ is exposed. The relative order of the post-learning elements remains uniformly random.
\end{lemma}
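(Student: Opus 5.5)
The plan is to model the arrival process as a uniformly random permutation $\pi$ of $E$ together with an independent $K\sim\operatorname{Bin}(n,p)$. We then condition on the event $\{L=S\}$ for a fixed $S\subseteq E$ and show two things. First, the weights are a deterministic function of $S$. Second, the conditional law of the order of $E\setminus S$ is uniform.

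We begin with the order. The event $\{L=S\}$ is the event $\{K=|S|\}\cap\{\pi$ places the elements of $S$ in the first $|S|$ positions$\}$. Because $K$ is independent of $\pi$, this event has probability $\Pr[K=|S|]\cdot\frac{|S|!\,(n-|S|)!}{n!}$. The same count shows that, for each fixed ordering $\sigma$ of $E\setminus S$, the event $\{L=S\}\cap\{$post-learning order $=\sigma\}$ has probability $\Pr[K=|S|]\cdot\frac{|S|!}{n!}$. Dividing, the conditional probability of each $\sigma$ is $1/(n-|S|)!$, so the relative order is uniform. The same computation also gives the equivalence stated in Section~\ref{sec:algorithm}: summing over orderings gives $\Pr[L=S]=\binom{n}{|S|}^{-1}\Pr[K=|S|]=p^{|S|}(1-p)^{n-|S|}$, so $L$ is a Bernoulli-$p$ sample.

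Next we show the weights are fixed. Submodular greedy on $L$ uses only value and independence queries on subsets of $L$, with ties broken by the fixed label order. It is therefore deterministic, and the ordered sequence $m_1,\dots,m_k$ and its prefixes $M_i$ are functions of $S$ alone; they do not depend on the internal order in which $L$ arrived. For $e\notin S$, the quantity $w_e$ is obtained by running the same deterministic greedy on $M+e$ and reading off $f(e\mid M_e)$, or $0$ if $e$ is never selected. Hence $w_e$ is a function of $(S,e)$ only. It does not depend on the positions of $E\setminus S$ or on any Dynamic Thinning randomness, which is drawn afresh and independently.

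Combining the two facts, conditional on $L=S$ we have a fixed weight vector on $E\setminus S$ whose elements arrive in uniformly random order. This is exactly a linear matroid secretary instance to which Dynamic Thinning, and hence Lemma~\ref{lem:dt-law}, applies. The only step that needs care is checking that the tie-breaking of greedy depends on labels rather than arrival order, so that no hidden dependence on the order within $L$, or on future arrivals, enters the weights. Everything else is elementary counting.
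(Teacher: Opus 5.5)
Your proposal is correct and follows the same two-part argument as the paper: the weights are a deterministic function of $(L,e)$ because greedy uses label-order tie-breaking, and the post-learning order stays uniform given $L$. The only difference is that the paper simply asserts the uniformity and the Bernoulli-$p$ equivalence, whereas you verify both by explicit counting.
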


\begin{proof}
The ordered greedy set $M$ and all its prefixes are determined by $L$. For every $e\notin L$, the rule in~\eqref{eq:frozen-weight} depends only on these prefixes, the identity of $e$, the matroid restriction to $M+e$, and value queries involving $M+e$. Hence $w_e$ is a deterministic function of $L$ and $e$ once the instance is fixed. It does not depend on the positions of the other post-learning elements or on any later random choice of the algorithm.

Conditional on $L$ being the first $K$ arrivals, the remaining elements appear in a uniformly random relative order. Thus, conditional on $L$, the second phase is a valid fixed-weight matroid secretary instance.
\end{proof}

\section{An Equivalent Three-Color Process}\label{sec:process}

\revision{We analyze the online algorithm through an offline process with the same output distribution. 
Following the simulated-greedy approach \cite{ma2013,feldman2018}, we select elements according to their marginal values with respect to the learning set and then reveal their random classifications. Here we use three classes and an additional independent retention step to describe the output distribution of the algorithm.
Lemma~\ref{lem:simulation} establishes the equivalence of output distributions, and all subsequent analysis uses this process.}

Define
\begin{equation}\label{eq:beta-gamma}
\beta=(1-p)s,
\qquad
\gamma=1-\exp\left(-\frac{1-s}{s}\right).
\end{equation}

\par\medskip
\noindent\begingroup
\setlength{\fboxsep}{8pt}
\fbox{\begin{minipage}{\dimexpr\linewidth-2\fboxsep-2\fboxrule-3pt\relax}
\textbf{Three-color process.}\par\smallskip
Independently assign every element one of three colors:
\[
\Pr[\text{learning}]=p,
\qquad
\Pr[\text{reference}]=\beta,
\qquad
\Pr[\text{ignored}]=1-p-\beta.
\]
Independently give every element a Bernoulli retention bit equal to one with probability $\gamma$.

Start with
\[
M=B=A=\varnothing.
\]
Repeatedly choose, among all unprocessed elements that are feasible to add to $M$, an element $e$ of maximum positive marginal value
\[
w_e=f(e\mid M),
\]
breaking ties by the fixed label order. Only after choosing $e$ do we reveal its color.

If $e$ is learning-colored, add it to $M$. If $e$ is reference-colored, add it to $B$ when $B+e$ is independent. If it is added to $B$ and its retention bit is one, also add it to $A$. If $e$ is ignored-colored, do nothing. An element that becomes infeasible for $M$ can be discarded permanently because $M$ only grows. The process stops when no feasible element has positive marginal value.
\end{minipage}}
\endgroup
\par\medskip

The set $B$ is an auxiliary reference greedy set in the analysis. The set $A$ is its retained part and will have the same distribution as the actual online output.

\begin{lemma}[Distributional equivalence]\label{lem:simulation}
The final set $A$ in the three-color process has the same distribution as the output of Algorithm~\ref{alg:main}.
\end{lemma}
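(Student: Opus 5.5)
Our plan is to couple the three-color process with Algorithm~\ref{alg:main} in two stages. In the first stage we show that the learning set and the frozen weights arise in the same way in both. In the second stage we identify the reference set with the Dynamic Thinning sample and apply Lemma~\ref{lem:dt-law}.

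\textbf{Step 1 (sampling).} In Algorithm~\ref{alg:main}, $L$ is a Bernoulli-$p$ sample of $E$. Conditional on $L$, Lemma~\ref{lem:fixed-weights} says the post-learning stream is a uniformly random order with fixed weights $w_e$. Lemma~\ref{lem:dt-law} then says that the output is an independent $\gamma$-thinning of $\OPT_{\mathbf w}(X)$, where $X\subseteq E\setminus L$ contains each element independently with probability $s$. Unconditionally, every element independently falls into exactly one of three classes:
\begin{itemize}[nosep]
\item $L$, with probability $p$;
\item $X$, with probability $(1-p)s=\beta$;
\item neither, with probability $1-p-\beta$.
\end{itemize}
These are exactly the color probabilities of the three-color process. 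So it suffices to show the following: for every fixed coloring and retention bits, the process produces $M=$ submodular greedy on $L$, and $B=\OPT_{\mathbf w}(X)$ with $w$ the frozen weights. Its retained part $A$ is then the $\gamma$-thinning. Revealing colors only after choosing an element is harmless, since colors are independent of the choice rule. The process is therefore a deterministic function of the coloring and the retention bits.

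\textbf{Step 2 ($M$ is greedy on $L$).} Fix the coloring. By induction on the steps of the process, the current $M$ is always a greedy prefix $M_i$ of submodular greedy on $L$. At each step, among all elements feasible for $M_i$ with positive marginal value, the process picks the maximizer $e$ of $f(e\mid M_i)$ under label tie-breaking. If the current maximizer is non-learning, the process does not add it to $M$; it removes it and continues with the next maximizer. Consequently the first learning-colored element chosen while $M=M_i$ is exactly the maximizer over feasible learning elements, namely $m_{i+1}$. The process also stops adding to $M$ exactly when submodular greedy on $L$ stops.

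\textbf{Step 3 (weights and $B$), the main obstacle.} We must show that a non-learning element $e$ is processed while $M=M_j$ for the same $j$ at which greedy on $M+e$ inserts $e$, so that the recorded value $f(e\mid M_j)$ equals its frozen weight $w_e$. This matching should follow from a direct comparison. Greedy on $M+e$ coincides with greedy on $L$ until the first $j$ with $M_j+e$ independent and $e$ beating $m_{j+1}$, including the tie order. In the process, $e$ is chosen while $M=M_j$ under exactly that condition. The case where $e$ becomes infeasible, or falls to non-positive marginal value, before being chosen corresponds to $w_e=0$. This is legitimate because infeasibility is monotone in $M$, and submodularity makes marginal values non-increasing.

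It remains to check the order in which reference elements are offered to $B$. Reference elements are processed in non-increasing order of $w_e$ with label ties broken consistently. This holds because values chosen at later steps never exceed earlier ones: the maximum over a shrinking feasible set of decreasing marginals is non-increasing. Hence $B$ is built by scanning the positive-weight elements of $X$ in decreasing weight-and-label order, so $B=\OPT_{\mathbf w}(X)$. The delicate part is the tie case: equal weights across different phases $j$ must be ordered by label, consistently with $\OPT_{\mathbf w}$. We will verify this with a careful argument that the value sequence chosen by the process is lexicographically non-increasing in (value, label).
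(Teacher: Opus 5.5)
Your proposal is correct and follows essentially the same route as the paper: match the three colors with $(L,X,\text{rest})$, show that the learning additions reproduce greedy on $L$ and that each non-learning element is processed at exactly the prefix that defines its frozen weight, then use the non-increasing processing order to get $B=\OPT_{\mathbf w}(X)$ and invoke Lemma~\ref{lem:dt-law}. The tie case you flag, which the paper glosses over, does go through as you expect: suppose reference elements $a,b$ have equal weight $c$, $a$ precedes $b$ in label order, and $b$ were processed first while $M=M_k$; then $a$ was unprocessed and feasible for $M_k$ with $f(a\mid M_k)\ge c$, so it was a tied maximizer and the label rule would not have chosen $b$.
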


\begin{proof}
First condition on the learning-colored set $L$. The learning-colored additions in the process are exactly submodular greedy on $L$, so they produce the same ordered set $M$ as Algorithm~\ref{alg:main}.

Fix a non-learning element $e$. Before the process handles $e$, only learning-colored elements have changed $M$. Removing the other non-learning elements from the conceptual scan therefore does not change the greedy prefix present when $e$ is considered. The weight assigned to $e$ by the process is therefore exactly its frozen weight $w_e$ from~\eqref{eq:frozen-weight}.

The positive weights encountered during the process are nonincreasing. Indeed, after a learning-colored element is added to $M$, submodularity can only decrease the remaining marginal values, and matroid feasibility can only remove options. Hence, among reference-colored elements, the rule ``add $e$ when $B+e$ is independent'' is exactly fixed-weight matroid greedy under the frozen weights. Thus $B=\OPT_{\mathbf w}(X)$, where $X$ is the set of reference-colored elements.

Conditional on $L$, every non-learning element is reference-colored with probability
\(
\frac{\beta}{1-p}=s,
\)
independently of the others. Therefore $X$ is an independent Bernoulli-$s$ sample of the post-learning elements. Finally, the retention bits make $A$ an independent $\gamma$-thinning of $B$. Lemma~\ref{lem:dt-law} now identifies this distribution with that of the actual Dynamic Thinning instance in Algorithm~\ref{alg:main}.
\end{proof}

The analysis below always considers the state immediately before the color and retention bit of the current element are revealed. At that moment, its weight $w_e$, the current sets $M,B,A$, and every span or feasibility test involving these sets are already determined. This order of conditioning is important in the probability calculations.

\section{Competitive Analysis}\label{sec:analysis}

The analysis proceeds in three steps. Section~\ref{subsec:rank-charge} provides a useful inequality. Section~\ref{subsec:residual-optimum} uses this inequality to upper-bound the optimal value. Section~\ref{subsec:weight-redundancy} proves the  competitive ratio of Algorithm~\ref{alg:main}.

Immediately before the color of the current element $e$ is revealed, call the step \emph{external} if
\[
e\notin\cl(M\cup B),
\]
and \emph{internal} otherwise. 

\subsection{A rank charge}\label{subsec:rank-charge}

The nonnegative integer
\[
\rank(M\cup B)-|M|
\]
changes in a particularly simple way. At an external step, $B+e\in\mathcal I$, so a reference-colored element is added to $B$ and increases the quantity by one. At an internal step, a learning-colored element is feasible for $M$ but already lies in $\cl(M\cup B)$; adding it to $M$ decreases the quantity by one. Every other color/outcome leaves the quantity unchanged.

Because the quantity starts at zero and never becomes negative, every internal learning addition
 can be paired with a distinct earlier external reference addition. The processing weights are 
 nonincreasing, so the earlier reference-colored element has weight at least that of its paired learning-colored element. Taking expectations yields the following weighted charge.

\begin{lemma}[Rank charge]\label{lem:rank-charge}
Let the sums range over the processed steps of the three-color process. Then
\begin{equation}\label{eq:rank-charge}
p\,\mathbb E\!\left[\sum_{\mathrm{internal}} w_e\right]
\le
\beta\,\mathbb E\!\left[\sum_{\mathrm{external}} w_e\right].
\end{equation}
\end{lemma}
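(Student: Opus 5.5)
The plan is to prove the pathwise pairing stated just before the lemma, and then take expectations by conditioning on the state at each step, immediately before the current color is revealed.

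Let $\Phi=\rank(M\cup B)-|M|$. First I would check the claimed dynamics.
\begin{itemize}
\item At an external step, $e\notin\cl(M\cup B)$, so $B+e$ is independent. A reference color therefore adds $e$ to $B$ and raises $\Phi$ by one.
\item At an external step, a learning color adds $e$ to $M$. Then $\rank(M\cup B)$ and $|M|$ both rise by one, so $\Phi$ is unchanged.
\item At an internal step, a learning color raises $|M|$ by one and leaves $\rank(M\cup B)$ fixed, so $\Phi$ drops by one.
\item At an internal step, a reference element may still enter $B$ if it is outside $\cl(B)$, but it lies in $\cl(M\cup B)$, so $\Phi$ is unchanged.
\item Ignored colors change nothing.
\end{itemize}
Since $M\subseteq M\cup B$ and $M$ is independent, $\Phi\ge 0$ always, and it starts at $0$.

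This gives a matching with a stack discipline. Each internal learning step is paired with the most recent unmatched external reference step. That step exists because $\Phi\ge0$, and it occurs earlier, so its weight is at least as large because processing weights are nonincreasing. Pathwise, this yields
\[
\sum_{\text{internal learning}} w_e\le\sum_{\text{external reference}} w_e .
\]

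Next I would convert both sides to the expectations in the lemma. Let $\mathcal F_t$ be the history just before the color of the $t$-th processed element is revealed. At that moment, whether the step is internal or external, and the weight $w_e$, are both $\mathcal F_t$-measurable. The color is independent of $\mathcal F_t$: it is a fresh independent label, and the choice of which element to process next never looked at it. Hence
\[
\mathbb E\bigl[w_e\mathbf 1[\text{learning}]\mathbf 1[\text{internal}]\bigr]=p\,\mathbb E\bigl[w_e\mathbf 1[\text{internal}]\bigr],
\]
and analogously with $\beta$ for external reference steps. Summing over $t$ is justified by nonnegativity and by the finite number of steps, at most $n$. This gives
\[
p\,\mathbb E\Bigl[\sum_{\text{internal}}w_e\Bigr]=\mathbb E\Bigl[\sum_{\text{int. learn}}w_e\Bigr]\le\mathbb E\Bigl[\sum_{\text{ext. ref}}w_e\Bigr]=\beta\,\mathbb E\Bigl[\sum_{\text{external}}w_e\Bigr].
\]

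The main obstacle is this measurability and independence point, since the set of processed steps is itself random. I would handle it by indexing steps $t=1,\dots,n$, setting terms to zero after the process stops, and noting that the identity of the $t$-th element is determined by the colors of earlier elements only. Each element's color is revealed exactly once, so conditional on $\mathcal F_t$ the color of the current element is still distributed as (learning $p$, reference $\beta$, ignored $1-p-\beta$). The retention bits play no role here.
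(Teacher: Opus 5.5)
Your proposal is correct and follows essentially the same route as the paper: each unit decrease of $\rank(M\cup B)-|M|$ (an internal learning step) is paired pathwise with a distinct earlier unit increase (an external reference step), using that processing weights are nonincreasing, and then conditioning on the state just before the color is revealed replaces the color indicators by $p$ and $\beta$. Your stack-based matching and explicit filtration argument spell out details the paper leaves implicit.
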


\begin{proof}
On every realization, pair each $-1$ change of $\rank(M\cup B)-|M|$ with a distinct earlier $+1$ change. The weight at the earlier step is at least the weight at the later step, because the processing weights are nonincreasing. Thus, pathwise,
\[
\sum_{\substack{\mathrm{internal}\text{ learning steps}}}w_e
\le
\sum_{\substack{\mathrm{external}\text{ reference steps}}}w_e.
\]
The weight and the internal/external classification are determined before the current color is revealed. An internal step is learning-colored with probability $p$, and an external step is reference-colored with probability $\beta$. Taking expectations gives~\eqref{eq:rank-charge}.
\end{proof}

\subsection{Upper bound on the optimum}\label{subsec:residual-optimum}

Fix an offline optimal independent set $O$. During the three-color process, maintain a set $Q\subseteq O$ of unprocessed optimal elements satisfying
\begin{equation}\label{eq:Q-rank}
\rank(M\cup B\cup Q)=\rank(M\cup B)+|Q|.
\end{equation}
Initially $Q=O$. Condition~\eqref{eq:Q-rank} means that adding all elements of $Q$ to $M\cup B$ increases the rank by exactly $|Q|$. In particular, every element of $Q$ is external and feasible to add to $M$.

The update is as follows. If the current element $e$ belongs to $Q$, remove it from $Q$ after it is processed, regardless of its color. If $e\notin Q$ and is not added to either $M$ or $B$, leave $Q$ unchanged. Now suppose that $e\notin Q$ and is added to $M$ or to $B$. At an internal step, this addition does not change the closure of $M\cup B$, so~\eqref{eq:Q-rank} continues to hold and no element is removed from $Q$. At an external step, the addition increases $\rank(M\cup B)$ by one. If~\eqref{eq:Q-rank} still holds after the addition, leave $Q$ unchanged. Otherwise, choose an element $o\in Q$ whose removal restores~\eqref{eq:Q-rank}, and replace $Q$ by $Q-o$. Such an element exists by the matroid exchange property, and removing one element is sufficient because $\rank(M\cup B)$ has increased by only one.

\begin{lemma}[Comparison-set invariant]\label{lem:Q-invariant}
The set $Q$ can be updated as above so that~\eqref{eq:Q-rank} holds after every processed step. Moreover, every element of $Q$ is unprocessed, external, and feasible to add to $M$.
\end{lemma}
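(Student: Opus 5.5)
We argue by induction over the processed steps, checking that each update rule preserves~\eqref{eq:Q-rank}. Write $S=M\cup B$ for the current set before a step and $S'$ for it afterwards. Initially $S=\varnothing$ and $Q=O$ is independent, so $\rank(Q)=|O|$ and~\eqref{eq:Q-rank} holds. We then check the cases of the update rule.

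\emph{Case 1: $e\in Q$.} Since $Q\cup\{\text{a basis of }S\}$ is independent, $e$ is external. If $e$ is added to $M$ or $B$, then $S'=S+e$. Because $Q$ together with a basis $I$ of $S$ is independent, $I\cup Q$ witnesses $\rank(S'\cup(Q-e))=\rank(S)+|Q|=\rank(S')+|Q-e|$. If $e$ is not added, $S'=S$, and removing $e$ preserves the identity by submodularity of rank. The identity is equivalent to ``$Q$ is independent and $\cl(S)\cap\operatorname{span}$ is trivial'', i.e.\ $I\cup Q$ is independent for a basis $I$ of $S$, which is inherited by subsets of $Q$.

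\emph{Case 2: $e\notin Q$ and $e$ is not added.} Then $S$ and $Q$ are unchanged.

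\emph{Case 3: internal addition.} Here $e\in\cl(S)$, so $\cl(S')=\cl(S)$. Both sides of~\eqref{eq:Q-rank} are unchanged, since rank is closure-invariant.

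\emph{Case 4: external addition.} Now $\rank(S')=\rank(S)+1$. Let $I$ be a basis of $S$; then $I+e$ is a basis of $S'$ and $I\cup Q$ is independent. If $I+e\cup Q$ is independent, nothing changes. Otherwise $I\cup Q+e$ contains a unique circuit $C$ through $e$. Since $I+e$ is independent, $C$ meets $Q$; pick $o\in C\cap Q$. Then $(I\cup Q+e)-o$ is independent of size $|I|+|Q|$, giving $\rank(S'\cup(Q-o))=\rank(S')+|Q-o|$. This is the ``one removal suffices'' exchange claimed in the text. The main care needed here is to phrase~\eqref{eq:Q-rank} as independence of $I\cup Q$ for some (equivalently every) basis $I$ of $S$, which makes every case a standard circuit argument.

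Finally, for the ``moreover'' part: processed elements of $Q$ are removed in Case 1, so $Q$ stays unprocessed. For $o\in Q$, independence of $I\cup Q$ gives $o\notin\cl(S)$, so $o$ is external. Since $M\subseteq S$, we also have $o\notin\cl(M)$, so $M+o\in\mathcal I$ and $o$ is feasible to add to $M$.
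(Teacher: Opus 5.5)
Your proof is correct and follows essentially the same route as the paper. It uses the same induction and case split, the same reformulation of~\eqref{eq:Q-rank} as independence of $I\cup Q$ (with $I\cap Q=\varnothing$) for a basis $I$ of $M\cup B$, and a single-element exchange at external additions. The only difference is cosmetic: you take $o$ from the unique circuit of $I\cup Q+e$, whereas the paper extends $I+e$ to a basis of $I\cup Q+e$. The garbled parenthetical about spans in Case~1 should simply be replaced by the basis formulation you state right after it.
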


\begin{proof}
The claims hold initially because $Q=O$ is independent and $M=B=\varnothing$. Suppose first that the current element $e$ belongs to $Q$. Before its color is revealed,~\eqref{eq:Q-rank} implies $e\notin\cl(M\cup B)$. If the color is ignored, removing $e$ from $Q$ clearly preserves~\eqref{eq:Q-rank}. If $e$ is added to $M$ or to $B$, then $\rank(M\cup B)$ increases by one while $|Q|$ decreases by one, so the same equality is preserved.

Now suppose $e\notin Q$. If the step is internal, adding $e$ to $M$ or to $B$ does not change the closure of $M\cup B$, so~\eqref{eq:Q-rank} remains true without changing $Q$. If the step is external and $e$ is added to $M$ or to $B$, let $J$ be a basis of the old $M\cup B$. Since a basis has the same closure as the set it spans,~\eqref{eq:Q-rank} implies that $J\cup Q$ is independent. Also $J+e$ is independent because the step is external. If $J\cup Q+e$ is independent, then~\eqref{eq:Q-rank} remains true without changing $Q$. Otherwise, $J\cup Q+e$ has rank $|J|+|Q|$, because $J\cup Q$ is independent. Extend $J+e$ to a basis of $J\cup Q+e$. This basis contains $J+e$ and exactly $|Q|-1$ elements of $Q$, so it is $J+e\cup(Q-o)$ for some $o\in Q$. Replacing $Q$ by $Q-o$ therefore restores~\eqref{eq:Q-rank}.

Finally, $Q$ contains only elements of the original optimal set that have not yet been processed, because a processed member is removed immediately. Equation~\eqref{eq:Q-rank} implies that each $q\in Q$ is outside $\cl(M\cup B)$, hence outside $\cl(M)$, so it is external and feasible to add to $M$.
\end{proof}

\revision{To compare the optimum with the learning and output sets, we use the residual-value potential}
\begin{equation}\label{eq:Psi}
\Psi=f(M\cup A\cup Q)-f(M\cup A).
\end{equation}
At the beginning, $M=A=\varnothing$ and $Q=O$, so $\Psi=f(O)$. At the end,~\eqref{eq:Q-rank} guarantees that every remaining element of $Q$ is feasible to add to the final $M$. The process has stopped because no feasible element has positive marginal value with respect to $M$. Hence every element of $Q$ has zero marginal value with respect to $M$, and by submodularity the whole set $Q$ has zero marginal value with respect to $M\cup A$. Therefore the final value of $\Psi$ is zero.

Consider one processed element $e$ before its color and retention bit are revealed. Besides its weight
\(
w_e=f(e\mid M),
\)
define
\begin{equation}\label{eq:u-I}
u_e=f(e\mid M\cup A),
\qquad
I_e=\mathbf 1\{B+e\in\mathcal I\}.
\end{equation}
Monotonicity and submodularity give
\begin{equation}\label{eq:u-order}
0\le u_e\le w_e,
\qquad
u_e\le f(e\mid A).
\end{equation}

We next bound the expected decrease of $\Psi$ at a single step.

\begin{lemma}[Residual-optimum decrease]\label{lem:residual-decrease}
Condition on the state immediately before the color and retention bit of the current element $e$ are revealed. Thus $M,B,A,Q,w_e,u_e,I_e$, and the internal/external classification are fixed. The expectations below are only over the unrevealed color and retention bit of $e$.
\begin{enumerate}[label=(\roman*)]
\item If $e\in Q$, then the step is external and the conditional expected decrease of $\Psi$ is at most $u_e$.
\item If $e\notin Q$ and the step is external, then the conditional expected decrease of $\Psi$ is at most
\[
(p+\beta)w_e+(p+\beta\gamma)u_e.
\]
\item If the step is internal, then the conditional expected decrease of $\Psi$ is at most
\[
(p+\beta\gamma I_e)u_e.
\]
\end{enumerate}
\end{lemma}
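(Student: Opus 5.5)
The plan is to write the decrease of $\Psi$ at one step as a telescoping difference and to bound it separately for each color outcome. Write $S=M\cup A$ for the state before the reveal, and let $S'$ and $Q'$ be the values after the step. Then
\[
\Psi-\Psi'=\bigl[f(S\cup Q)-f(S'\cup Q')\bigr]+\bigl[f(S')-f(S)\bigr].
\]
The second bracket is nonnegative and is exactly the gain $f(S')-f(S)$. It is zero unless $e$ enters $M$ or $A$, in which case it equals $u_e$. The first bracket is at most $f(S\cup Q)-f(S\cup Q')$, because $S'\supseteq S$ and $f$ is monotone. So in every case
\[
\Psi-\Psi'\le \bigl(\text{gain}\bigr)+\bigl(\text{loss from removing elements of }Q\bigr),
\]
where the loss is $f(S\cup Q)-f(S\cup Q')$. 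This loss is zero when $Q'=Q$, and by submodularity it is at most a single-element marginal when one element is removed.

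For case (i), $e\in Q$ and $e$ is removed regardless of color. I would argue directly rather than through the split:
\[
\Psi'=f(S'\cup Q-e)-f(S').
\]
If $e$ joins $S$, so that $S'=S+e$, then $S'\cup(Q-e)=S\cup Q$, and the decrease is $f(S+e)-f(S)=u_e$. If $e$ does not join $S$, then $S'=S$, and the decrease is
\[
f(S\cup Q)-f(S\cup Q-e)\le f(e\mid S)=u_e,
\]
by submodularity. In both sub-cases the decrease is at most $u_e$ pointwise, hence also in expectation. Lemma~\ref{lem:Q-invariant} gives that the step is external.

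For case (ii), the element $e\notin Q$ is external, so $M+e$ and $B+e$ are both independent and $I_e=1$. With probability $p$ the element joins $M$. With probability $\beta$ it joins $B$, and with probability $\beta\gamma$ it also joins $A$. Otherwise nothing changes, and neither does $Q$. The gain therefore contributes $(p+\beta\gamma)u_e$ in expectation. The loss is nonzero only when $e$ joins $M\cup B$, which happens with probability $p+\beta$, and then at most one element $o$ is removed from $Q$. That loss is
\[
f(S\cup Q)-f(S\cup Q-o)\le f(o\mid S)\le f(o\mid M).
\]
The key point, and what I expect to be the main obstacle, is bounding this by $w_e$. By Lemma~\ref{lem:Q-invariant}, $o$ is unprocessed and feasible for $M$, and the greedy process selected $e$ as a maximizer of $f(\cdot\mid M)$ among such elements. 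Hence $f(o\mid M)\le w_e$. The care needed is to note that $o$'s feasibility and unprocessed status hold before the step, which is exactly the moment when $e$ was chosen. This yields $(p+\beta)w_e+(p+\beta\gamma)u_e$.

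For case (iii), the step is internal, so adding $e$ to $M$ or $B$ does not change $Q$, and the loss term vanishes. Only the gain remains. The element joins $M$ with probability $p$; it is feasible for $M$ because it was chosen. It joins $A$ only if it is reference-colored, $B+e$ is independent, and the retention bit is one, which has probability $\beta\gamma I_e$. Each of these outcomes gives gain $u_e$, and the two are disjoint. This gives $(p+\beta\gamma I_e)u_e$.
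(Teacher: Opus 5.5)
Your proposal is correct and follows essentially the same argument as the paper. Your gain-plus-loss split, with loss $f(S\cup Q)-f(S\cup Q')$, is exactly the monotonicity step the paper uses in case (ii), and your case analysis and the bound $f(o\mid M)\le w_e$ match the paper's. The only point left implicit is that in case (iii) you need $e\notin Q$, which holds because every element of $Q$ is external by Lemma~\ref{lem:Q-invariant}, and this is what guarantees $Q$ is unchanged.
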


\begin{proof}
If $e\in Q$, the invariant~\eqref{eq:Q-rank} makes the step external. The update removes $e$ from $Q$. If $e$ enters $M$ or $A$, then the union $M\cup A\cup Q$ in the first term of~\eqref{eq:Psi} remains unchanged, whereas the second term changes from $f(M\cup A)$ to $f(M\cup A+e)$. Hence $\Psi$ decreases by $f(e\mid M\cup A)=u_e$. If $e$ enters neither set, only $Q$ loses $e$, and the decrease is
\[
f(e\mid M\cup A\cup(Q-e))\le u_e.
\]
This proves (i).

Now suppose $e\notin Q$ and the step is external. If $e$ is learning-colored, it enters $M$ and one element $o\in Q$ may be removed to restore~\eqref{eq:Q-rank}. If $e$ is reference-colored and its retention bit is one, it  enters $A$, and again one element $o\in Q$ may be removed. In either event that adds $e$ to $M\cup A$, write $R=M\cup A$ for the old set. The decrease of $\Psi$ is at most
\begin{align*}
&f(R\cup Q)-f(R)-f(R+e\cup(Q-o))+f(R+e)\\
&\qquad\le u_e+f(o\mid R\cup(Q-o))\\
&\qquad\le u_e+f(o\mid M)\\
&\qquad\le u_e+w_e.
\end{align*}
The last inequality holds because $o$ is unprocessed, is feasible to add to $M$ by~\eqref{eq:Q-rank}, and $e$ was chosen with maximum current marginal value. If a reference-colored $e$ is not retained in $A$, then only $Q$ may lose $o$, and the decrease is at most $w_e$. Thus the conditional expected decrease is at most
\[
p(u_e+w_e)+\beta\gamma(u_e+w_e)+\beta(1-\gamma)w_e
=(p+\beta)w_e+(p+\beta\gamma)u_e.
\]
This proves (ii).

Finally suppose the step is internal. Then $e\notin Q$, and no element is removed from $Q$ because adding $e$ does not change the closure of $M\cup B$. If $e$ is learning-colored, $M\cup A$ gains $e$ and the decrease of $\Psi$ is at most $u_e$. If $e$ is reference-colored, $A$ gains $e$ only when $I_e=1$ and the retention bit is one, again causing decrease at most $u_e$. This proves (iii).
\end{proof}

\begin{lemma}[Upper bound]\label{lem:joint-residual}
For the three-color process,
\begin{equation}\label{eq:joint-residual}
f(O)
\le
\bigl(\max\{1,\,2p+\beta+\beta\gamma\}+\beta\bigr)
\mathbb E\!\left[\sum I_eu_e\right]
+(p+2\beta)\,\mathbb E\!\left[\sum (w_e-u_e)\right].
\end{equation}
\end{lemma}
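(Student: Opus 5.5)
The plan is to telescope the potential $\Psi$ from~\eqref{eq:Psi} and bound each step with Lemma~\ref{lem:residual-decrease}. Then the leftover internal learning terms, which are not yet multiplied by $I_e$, are absorbed using the rank charge of Lemma~\ref{lem:rank-charge}.

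\emph{Telescoping.} Since $\Psi=f(O)$ initially and $\Psi=0$ at termination, $f(O)$ is the sum over processed steps of the decrease of $\Psi$. The process has at most $n$ steps. By the tower property, we condition at each step on the state immediately before the color and retention bit of the current element are revealed. This gives that $f(O)$ equals the expected sum, over processed steps, of the conditional expected decreases. Lemma~\ref{lem:residual-decrease} bounds each summand. So $f(O)$ is at most the expectation of
\[
\sum_{e\in Q}u_e+\sum_{\mathrm{ext},\,e\notin Q}\bigl((p+\beta)w_e+(p+\beta\gamma)u_e\bigr)+\sum_{\mathrm{int}}(p+\beta\gamma I_e)u_e .
\]

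\emph{Rewriting in terms of $I_eu_e$ and $w_e-u_e$.} At an external step, $e\notin\cl(M\cup B)\supseteq\cl(B)$, so $B+e\in\mathcal I$ and $I_e=1$. Hence on external steps $u_e=I_eu_e$. For $e\in Q$, the term $u_e$ is at most $\max\{1,2p+\beta+\beta\gamma\}\,I_eu_e$. For external $e\notin Q$, write $w_e=u_e+(w_e-u_e)$. This gives
\[
(2p+\beta+\beta\gamma)I_eu_e+(p+\beta)(w_e-u_e),
\]
and the first coefficient is again at most the max. On internal steps, the term $\beta\gamma I_eu_e$ is already of the right form. The remaining term $p\,u_e$ lacks the factor $I_e$, and handling it is the main obstacle.

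\emph{Absorbing the internal learning terms.} This is where Lemma~\ref{lem:rank-charge} enters. Using $u_e\le w_e$ from~\eqref{eq:u-order}, I bound
\[
p\,\mathbb E\Bigl[\sum_{\mathrm{int}}u_e\Bigr]\le p\,\mathbb E\Bigl[\sum_{\mathrm{int}}w_e\Bigr]\le\beta\,\mathbb E\Bigl[\sum_{\mathrm{ext}}w_e\Bigr].
\]
Splitting the right-hand side again, with $I_e=1$ on external steps, gives
\[
\beta\,\mathbb E\Bigl[\sum_{\mathrm{ext}}I_eu_e\Bigr]+\beta\,\mathbb E\Bigl[\sum_{\mathrm{ext}}(w_e-u_e)\Bigr].
\]
Collecting coefficients:
\begin{itemize}
\item on external $I_eu_e$ the coefficient is at most $\max\{1,2p+\beta+\beta\gamma\}+\beta$;
\item on internal $I_eu_e$ it is $\beta\gamma$, which is below that value;
\item on $w_e-u_e$ it is at most $p+2\beta$.
\end{itemize}
Finally, $w_e-u_e\ge0$ on every step by~\eqref{eq:u-order}, so extending each sum to all processed steps only increases the right-hand side. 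This yields~\eqref{eq:joint-residual}.
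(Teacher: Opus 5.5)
Your proposal is correct and follows the paper's proof essentially step for step. You telescope $\Psi$ with Lemma~\ref{lem:residual-decrease}, use $I_e=1$ on external steps, and absorb the internal $p\,u_e$ term through the rank charge of Lemma~\ref{lem:rank-charge}. The only difference is cosmetic: you apply $u_e\le w_e$ on internal steps before invoking the rank charge, whereas the paper splits $w_e=u_e+(w_e-u_e)$ afterwards and drops the nonpositive term $-p\,\mathbb E[\sum_{\mathrm{internal}}(w_e-u_e)]$, which yields the same inequality.
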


\begin{proof}
For an external step not in $Q$, the bound in Lemma~\ref{lem:residual-decrease}(ii) can be rewritten as
\[
(2p+\beta+\beta\gamma)u_e+(p+\beta)(w_e-u_e).
\]
For an external step in $Q$, Lemma~\ref{lem:residual-decrease}(i) is at most $u_e$. Thus, for any external step, the decrease is at most
\[
\max\{1,\,2p+\beta+\beta\gamma\}u_e+(p+\beta)(w_e-u_e).
\]
Summing all decreases of $\Psi$, using its initial value $f(O)$ and final value zero, gives
\begin{align}
 f(O)\le{}&
 \max\{1,\,2p+\beta+\beta\gamma\}\,
 \mathbb E\!\left[\sum_{\mathrm{external}}u_e\right]
 +(p+\beta)\,\mathbb E\!\left[\sum_{\mathrm{external}}(w_e-u_e)\right]\notag\\
 &+p\,\mathbb E\!\left[\sum_{\mathrm{internal}}u_e\right]
 +\beta\gamma\,\mathbb E\!\left[\sum_{\mathrm{internal}}I_eu_e\right].
\label{eq:Psi-sum}
\end{align}

The rank charge in Lemma \ref{lem:rank-charge} converts the internal $u_e$-term into external terms. Since $w_e=u_e+(w_e-u_e)$, Lemma~\ref{lem:rank-charge} implies
\begin{align*}
p\,\mathbb E\!\left[\sum_{\mathrm{internal}}u_e\right]
\le{}&
\beta\,\mathbb E\!\left[\sum_{\mathrm{external}}u_e\right]
+\beta\,\mathbb E\!\left[\sum_{\mathrm{external}}(w_e-u_e)\right]-p\,\mathbb E\!\left[\sum_{\mathrm{internal}}(w_e-u_e)\right].
\end{align*}
Substituting this into~\eqref{eq:Psi-sum}, dropping the final nonpositive term, using $I_e=1$ at every external step, and using
\[
\max\{1,\,2p+\beta+\beta\gamma\}+\beta\ge\beta\gamma,
\]
gives~\eqref{eq:joint-residual}.
\end{proof}

\subsection{Competitive ratio}\label{subsec:weight-redundancy}

\revision{It remains to control the second term in~\eqref{eq:joint-residual}. Ma, Tang, and Wang~\cite[Lemma~11]{ma2013} use a potential containing the difference between the value of the learning set and that of its union with the candidate set. Here we apply this idea to the retained output $A$ through the nonnegative potential}
\begin{equation}\label{eq:Phi}
\Phi=f(M\cup A)-f(M).
\end{equation}
At one step, if $e$ is learning-colored, it is added to $M$ and changes the potential by $u_e-w_e$. If $e$ is reference-colored, it is added to $A$ only when $I_e=1$ (meaning that $B+e\in\mathcal I$) and its retention bit is one. In that event, the potential changes by $u_e$. In every other case, the potential is unchanged. 
Therefore, the conditional expected increment of $\Phi$ at one step  is \[
\mathbb E[\Delta\Phi\mid\text{current state}]
=p(u_e-w_e)+\beta\gamma I_eu_e,
\]
where the current state means the complete state immediately before the color and retention bit of $e$ are revealed. 
The potential starts at zero and is nonnegative at the end. Hence
\begin{equation}\label{eq:gap-payment}
p\,\mathbb E\!\left[\sum(w_e-u_e)\right]
\le
\beta\gamma\,\mathbb E\!\left[\sum I_eu_e\right].
\end{equation}

The actual output value is accumulated through the marginal value $f(e\mid A)$. A step adds $e$ to $A$ with conditional probability $\beta\gamma I_e$, so
\begin{equation}\label{eq:A-value}
\mathbb E[f(A)]
=\beta\gamma\,\mathbb E\!\left[\sum I_e f(e\mid A)\right]
\ge
\beta\gamma\,\mathbb E\!\left[\sum I_eu_e\right],
\end{equation}
where the inequality uses $u_e\le f(e\mid A)$ from~\eqref{eq:u-order}.

Combining Lemma~\ref{lem:joint-residual} with~\eqref{eq:gap-payment} and~\eqref{eq:A-value} gives the general competitive guarantee.

\begin{theorem}[Competitive ratio]\label{thm:general}\label{thm:main}
For every $p,s\in(0,1)$, let
\[
\beta=(1-p)s,
\qquad
\gamma=1-\exp\left(-\frac{1-s}{s}\right).
\]
Then Algorithm~\ref{alg:main} is $C(p,s)$-competitive, where
\begin{equation}\label{eq:general-C}
C(p,s)=
\frac{\max\{1,\,2p+\beta+\beta\gamma\}+\beta}{\beta\gamma}
+1+\frac{2\beta}{p}.
\end{equation}
In particular, for
\[
p=\frac5{18},
\qquad
s=\frac{23}{70},
\]
the algorithm is $C$-competitive, where
\begin{equation}\label{eq:main-constant}
C=
\frac{1559}{299(1-e^{-47/23})}
+\frac{474}{175}
=8.6988177308\ldots.
\end{equation}
\end{theorem}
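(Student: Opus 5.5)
The plan is to combine Lemma~\ref{lem:joint-residual} with the gap payment~\eqref{eq:gap-payment} and the output bound~\eqref{eq:A-value}, working throughout in the three-color process. This is legitimate because Lemma~\ref{lem:simulation} shows that $A$ there has the same distribution as the output of Algorithm~\ref{alg:main}. Write
\[
S_1=\mathbb E\Bigl[\sum I_eu_e\Bigr],\qquad S_2=\mathbb E\Bigl[\sum (w_e-u_e)\Bigr],\qquad \kappa=\max\{1,\,2p+\beta+\beta\gamma\}+\beta .
\]
Lemma~\ref{lem:joint-residual} then reads $f(O)\le \kappa S_1+(p+2\beta)S_2$.

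To eliminate $S_2$, I use~\eqref{eq:gap-payment}, which gives $S_2\le (\beta\gamma/p)\,S_1$. Since $S_2\ge 0$ termwise by~\eqref{eq:u-order}, the coefficient $p+2\beta>0$ may be applied to this bound, and
\[
f(O)\le \Bigl(\kappa+\beta\gamma+\frac{2\beta^2\gamma}{p}\Bigr)S_1 .
\]
Next,~\eqref{eq:A-value} gives $S_1\le \mathbb E[f(A)]/(\beta\gamma)$. Dividing through yields
\[
f(O)\le\Bigl(\frac{\kappa}{\beta\gamma}+1+\frac{2\beta}{p}\Bigr)\mathbb E[f(A)]=C(p,s)\,\mathbb E[f(A)],
\]
which is~\eqref{eq:general-C}.

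The only points needing care are whether~\eqref{eq:gap-payment} and~\eqref{eq:A-value} are rigorous, since the sums range over a random number of steps. For~\eqref{eq:gap-payment}, I will write $\Phi_{\mathrm{final}}=\sum\Delta\Phi$. The process has at most $n$ steps, so by the tower property (optional stopping for a bounded-length process) $\mathbb E[\Phi_{\mathrm{final}}]$ equals the sum of the conditional expected increments. Because $\Phi_{\mathrm{final}}\ge0$ by monotonicity, this gives~\eqref{eq:gap-payment}. For~\eqref{eq:A-value}, $f(A)$ telescopes over the additions to $A$, and each addition occurs with conditional probability $\beta\gamma I_e$. Here $f(e\mid A)$ is evaluated at the pre-step $A$, which is measurable before the color is revealed, so the same tower argument applies.

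For the specific constants, substitute $p=5/18$ and $s=23/70$. This gives $\beta=(13/18)(23/70)=299/1260$ and $(1-s)/s=47/23$, so $\gamma=1-e^{-47/23}$. I then check which branch of the maximum is active. We have $2p+\beta=700/1260+299/1260=999/1260<1$, and adding $\beta\gamma<299/1260$ gives $2p+\beta+\beta\gamma<1298/1260$. This is not obviously below $1$, so I compute $\gamma\approx1-0.1296=0.8704$. Then $\beta\gamma\approx 0.2065<261/1260\approx0.2071$, so the sum is below $1$ and the maximum equals $1$. Hence $\kappa=1+\beta=1559/1260$, and
\[
\frac{\kappa}{\beta\gamma}=\frac{1559}{299\,(1-e^{-47/23})},\qquad 1+\frac{2\beta}{p}=1+\frac{2\cdot 299\cdot 18}{1260\cdot 5}=1+\frac{299}{175}=\frac{474}{175},
\]
matching~\eqref{eq:main-constant}.

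The main obstacle is this branch check of the maximum: the margin is tight, of order $10^{-3}$. I would verify it with a rigorous bound such as $e^{-47/23}>e^{-2.05}>0.128$, which gives $\beta\gamma<0.2071$, after first confirming that the needed inequality is $\beta\gamma<1-999/1260=261/1260$.
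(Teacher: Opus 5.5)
Your proposal is correct and matches the paper's proof: substitute \eqref{eq:gap-payment} into Lemma~\ref{lem:joint-residual}, convert the resulting bound to $\mathbb E[f(A)]$ via \eqref{eq:A-value}, and, for the specific parameters, check that $2p+\beta+\beta\gamma=(1298-299e^{-47/23})/1260<1$, so the maximum equals $1$. Your explicit estimate $e^{-47/23}>e^{-2.05}>0.128$ gives $\gamma<0.872<261/299$, which is a valid way to make rigorous the step the paper calls ``easy to verify''; your remarks on the tower property for the bounded-length process also correctly justify \eqref{eq:gap-payment} and \eqref{eq:A-value}.
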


\begin{proof}
By Lemma~\ref{lem:joint-residual} and~\eqref{eq:gap-payment},
\begin{align*}
f(O)
&\le
\left(\max\{1,\,2p+\beta+\beta\gamma\}+\beta
+(p+2\beta)\frac{\beta\gamma}{p}\right)
\mathbb E\!\left[\sum I_eu_e\right]\\
&=
\left(
\frac{\max\{1,\,2p+\beta+\beta\gamma\}+\beta}{\beta\gamma}
+1+\frac{2\beta}{p}
\right)
\beta\gamma\,\mathbb E\!\left[\sum I_eu_e\right],
\end{align*}
which is no more than $C(p,s)\mathbb E[f(A)]$ by
Equation~\eqref{eq:A-value}. Thus the general guarantee follows.

For the stated parameters $p$ and $s$,
\[
\beta=(1-p)s=\frac{299}{1260},
\qquad
\gamma=1-e^{-47/23}.
\]
Moreover, it is easy to verify that
\[
2p+\beta+\beta\gamma
=
\frac{1298-299e^{-47/23}}{1260}<1.
\]
Hence
\[
\max\{1,\,2p+\beta+\beta\gamma\}=1.
\]
Substituting these values into~\eqref{eq:general-C} gives
\begin{align*}
C
&=
\frac{1+\beta}{\beta\gamma}+1+\frac{2\beta}{p}=
\frac{1559}{299(1-e^{-47/23})}+\frac{474}{175},
\end{align*}
which is~\eqref{eq:main-constant}.
\end{proof}

We count value queries and independence queries in Algorithm~\ref{alg:main} separately. It does not need an exact offline submodular optimization oracle.
Recall that $r$ is the matroid rank. 

\begin{theorem}[Query complexity]\label{thm:queries}
     Algorithm~\ref{alg:main} uses $O(nr)$ value queries and $O(n^2)$ independence queries. All queries involve only elements that have already arrived.
\end{theorem}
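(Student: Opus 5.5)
The plan is to count queries phase by phase: the learning-phase greedy, the frozen-weight computations for later arrivals, and the inner Dynamic Thinning instance. At each stage we check that only already-arrived elements are involved.

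\emph{Learning phase.} I would implement submodular greedy on $L$ in rounds. Before round $i+1$, maintain the set of candidates in $L\setminus M_i$ that are still feasible, i.e.\ with $M_i+x\in\mathcal I$. In each round, query $f(M_i+x)$ for every remaining candidate and keep $f(M_i)$ from the previous round. One independence query per candidate then updates feasibility after $m_{i+1}$ is added. There are at most $k\le r$ rounds, since $M$ is independent, and each round costs $O(|L|)\le O(n)$ queries. This gives $O(nr)$ value queries and $O(nr)\subseteq O(n^2)$ independence queries, all on subsets of $L$. These elements have arrived by the time the learning phase ends, and all learning elements are rejected anyway, so no decision has to be made before they arrive.

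\emph{Frozen weights.} I would not rerun greedy on $M+e$ from scratch. Instead, use the stored prefixes and the observation from the proof of Lemma~\ref{lem:simulation}: greedy on $M+e$ follows the sequence $m_1,m_2,\dots$ until the first index $i$ at which $e$ is feasible for $M_i$ and beats $m_{i+1}$. The comparison is $f(e\mid M_i)>f(m_{i+1}\mid M_i)$, or equality with $e$ winning the label tie-break. If $e$ becomes infeasible, or its marginal value becomes zero, before winning, then $w_e=0$.

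The quantities $f(m_{i+1}\mid M_i)$ are already known from the learning phase. So for each $i=0,\dots,k$ we need at most one value query, $f(M_i+e)$, and one independence query, $M_i+e\in\mathcal I$. The case $i=k$ handles $e$ being chosen after all of $M$.

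One point needs justifying: while $e$ is unselected, the other elements of $M$ keep the same relative greedy order. This holds because the candidate pool is a subset of $M+e$, and the marginal values of the $m_j$ are unchanged until $e$ is picked. This simulation is the step I expect to need the most care, since it is exactly what makes the per-element cost $O(r)$ rather than $O(r^2)$. The result is $O(r)$ value queries and $O(r)$ independence queries per post-learning element, for $O(nr)$ of each over all arrivals. Every query involves only $M\subseteq L$ and the current arrival $e$.

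\emph{Dynamic Thinning.} The inner instance has $N=n-K\le n$ elements. By Lemma~\ref{lem:dt-query} it uses $O(N^2)=O(n^2)$ independence queries, all on already-arrived elements. It uses no value queries, because the weights $w_e$ are passed in as numbers.

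Summing the three phases gives $O(nr)$ value queries and $O(nr+n^2)=O(n^2)$ independence queries, using $r\le n$. This proves Theorem~\ref{thm:queries}.
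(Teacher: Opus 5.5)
Your proposal is correct and follows the paper's proof essentially step for step. Both arguments count three parts: $O(nr)$ queries for submodular greedy on $L$, an $O(r)$-query scan of the stored prefixes $M_0,\dots,M_k$ against the stored marginals $f(m_{i+1}\mid M_i)$ for each frozen weight, and Lemma~\ref{lem:dt-query} for the $O(n^2)$ independence queries of Dynamic Thinning. Your extra details are valid refinements that the paper leaves implicit: reusing the stored $f(M_i)$ so each candidate costs one value query per round, exiting early once $f(e\mid M_i)=0$, and noting that only $m_{i+1}$ competes with $e$ at prefix $M_i$.
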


\begin{proof}
    We first show that 
after the learning phase in Algorithm~\ref{alg:main}, the frozen weight of one arriving element can be computed using $O(r)$ value queries and $O(r)$ independence queries. 
Store the ordered greedy sequence $m_1,\ldots,m_k$, its prefixes $M_i$, and the marginal value at which every $m_{i+1}$ was selected from $M_i$. Scan the stored prefixes in order. At prefix $M_i$, one independence query checks whether $M_i+e$ is independent and one value query evaluates $f(e\mid M_i)$. If $e$ is feasible and its marginal is larger than the stored marginal of $m_{i+1}$, or is tied and precedes $m_{i+1}$ in the fixed label order, then the hypothetical run selects $e$ next and its weight is known. At the last prefix, a feasible element of positive marginal is selected. If $e$ ever becomes infeasible, it remains infeasible at every later prefix. Thus at most $|M|+1\le r+1$ prefixes are inspected. Since each inspected prefix uses at most one value query and one independence query, this proves the $O(r)$ query bound for one frozen weight.

Running submodular greedy on the learning sample can be implemented by at most $r$ greedy rounds, each scanning at most $n$ available elements. This uses $O(nr)$ value queries and $O(nr)$ independence queries.
Computing all post-learning frozen weights therefore uses another $O(nr)$ queries of each type. The inner Dynamic Thinning instance uses no further value queries, and by Lemma~\ref{lem:dt-query}, it uses $O(n^2)$ independence queries in the worst case. Since $r\le n$, the total number of independence queries is $O(n^2)$, while the total number of value queries is $O(nr)$.
\end{proof}

\section{Conclusion}\label{sec:conclusion}

Replacing the single-level Dynamic Thinning subroutine by Multilevel Dynamic Thinning~\cite{joyce2026} in our framework gives a slightly improved competitive factor of about $8.6$. The improvement is modest, while both the analysis and the implementation become substantially more involved. It remains an interesting question whether a different approach can improve the competitive ratio further without introducing comparable complexity.

Another direction is to exploit the structure of special matroid classes. In particular, laminar and graphic matroids may admit stronger guarantees than arbitrary matroids. For these two classes, we obtain several improved bounds in the appendix. Determining the best competitive ratios for such structured matroids remains open.

\paragraph{AI statement.}
 The author used GPT-5.6 Sol and GPT-6 Astra to help design the algorithm and
develop its proofs. The author has checked the proofs and takes full responsibility for their correctness
and for the contents of this paper.


\clearpage
\appendix
\begin{center}
\section*{Appendix}
\end{center}
\section{Guarantees for Laminar and Graphic Matroids}\label{app:structured}
\setcounter{equation}{0}
\renewcommand{\theequation}{\thesection.\arabic{equation}}
\renewcommand{\theHequation}{\thesection.\arabic{equation}}

For the monotone submodular matroid secretary problem, we give polynomial-time randomized online algorithms with competitive ratios
\[
 C_{\mathrm{lam}}=\frac{31250}{4743}\approx6.589
 \qquad\text{and}\qquad
 C_{\mathrm{gr}}=\frac{546750}{80771}\approx6.769
\]
for laminar and graphic matroids, respectively. The laminar result holds without any restriction on the depth or branching of the laminar family. The graphic result allows undirected multigraphs, including parallel edges, and assumes that the endpoints of an edge are revealed when it arrives. Both algorithms use $O(n^2r)$ value queries and independence queries. 

\revision{Unlike Algorithm~\ref{alg:main}, the algorithms in this appendix recompute randomized greedy sets from arrival prefixes. They combine a known sampling guarantee with existing matroid-specific acceptance arguments.}

\begin{definition}[Laminar matroid]\label{def:laminar-appendix}
A family $\mathcal L\subseteq2^E$ is \emph{laminar} if any two sets in the family are disjoint or one contains the other. Given a nonnegative integer capacity $b_F$ for each $F\in\mathcal L$, the associated laminar matroid has independent sets
\[
 \mathcal I=\{S\subseteq E:|S\cap F|\le b_F\text{ for every }F\in\mathcal L\}.
\]
\end{definition}

\begin{definition}[Graphic matroid]\label{def:graphic-appendix}
Let $G=(V,E)$ be an undirected multigraph. Its \emph{graphic matroid} has ground set $E$, and an edge set is independent exactly when it is a forest. A self-loop cannot belong to any forest and hence to any independent set. Two parallel edges form a cycle of length two.
\end{definition}

Throughout this appendix, fix an offline optimal independent set
\[
 O\in\arg\max\{f(S):S\in\mathcal I\}.
\]
All label orders are fixed independently of the random arrivals. The algorithms use the shared submodular greedy subroutine described below.

\subsection{A shared submodular greedy subroutine}\label{app:prefix-subroutine}


To obtain a uniform guarantee for every finite ground set, add zero-value dummy elements that cannot belong to any independent set, so that the extended ground set has size
\[
 N=18\left\lceil\frac{\max\{n,31250000\}}{18}\right\rceil.
\]
For a set $S$ of real and dummy elements, extend the objective by $f^+(S)=f(S\cap E)$. We write $f$ for this extension below. It remains normalized, monotone, and submodular, and the optimum value is still $f(O)$.

Choose the $n$ positions of the real elements uniformly among the $N$ slots, independently of the real arrival order, and uniformly permute the labeled dummy elements over the other positions. Interleaving the real arrivals with these dummy elements produces a uniformly random permutation of the extended ground set. Dummy elements are always rejected and require no oracle calls. Let $e_i$ be the element in slot $i$ and $H_i=\{e_1,\ldots,e_i\}$.

\Needspace{22\baselineskip}
We first bound the value of submodular greedy on an independent random sample.
This lemma is a consequence of the monotone SampleGreedy guarantee of Feldman, Harshaw, and Karbasi~\cite{feldmanharshawkarbasi2017}. 
 We retain a self-contained proof below.

\begin{lemma}[Submodular greedy on a random sample \cite{feldmanharshawkarbasi2017}]\label{lem:sample-greedy-appendix}
Let $R$ contain each element independently with probability $q\in[0,1/2]$, and let $\operatorname{Greedy}(R)$ be the submodular greedy set defined in Section~\ref{sec:prelim}. Then
\[
 \mathbb E[f(\operatorname{Greedy}(R))]\ge qf(O).
\]
\end{lemma}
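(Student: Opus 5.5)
The plan is to prove the bound with the same simulated-greedy potential argument used in Section~\ref{sec:analysis}, specialized to a single color class. First I would describe $\operatorname{Greedy}(R)$ by an equivalent two-color process. Start with $M=\varnothing$. Repeatedly choose, among all unprocessed elements $e$ with $M+e\in\mathcal I$, one of maximum positive marginal value $w_e=f(e\mid M)$, breaking ties by the fixed label order. Only then reveal whether $e\in R$, which happens with probability $q$ independently of the past. If $e\in R$, add $e$ to $M$; otherwise discard $e$. Stop when no feasible element has positive marginal value.

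This process produces exactly $\operatorname{Greedy}(R)$, by the argument of Lemma~\ref{lem:simulation}. Elements outside $R$ never change $M$, so deleting them from the scan leaves the sequence of prefixes unchanged. Moreover, an element that becomes infeasible stays infeasible, since $M$ only grows. Consequently $\mathbb E[f(\operatorname{Greedy}(R))]=q\,\mathbb E[\sum w_e]$, where the sum ranges over the processed steps and each $w_e$ is determined before the coin of $e$ is revealed.

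Next I would maintain a set $Q\subseteq O$ of unprocessed optimal elements with $\rank(M\cup Q)=\rank(M)+|Q|$, starting from $Q=O$. The update mirrors Lemma~\ref{lem:Q-invariant} with $B=\varnothing$. A processed member of $Q$ is removed from $Q$. If some $e\notin Q$ is added to $M$, at most one element $o\in Q$ is removed to restore the rank condition, which is possible by the exchange argument given there. Every element of $Q$ then remains feasible for $M$.

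The potential is $\Psi=f(M\cup Q)-f(M)$. It starts at $f(O)$ and ends at $0$, because every remaining element of $Q$ is feasible but has nonpositive marginal value with respect to the final $M$, so by submodularity $f(Q\mid M)=0$.

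The key step is bounding the conditional expected decrease of $\Psi$ at one step by $w_e$. There are two cases.
\begin{itemize}
\item If $e\in Q$, the decrease is exactly $w_e$ when $e$ is added. When $e$ is not added, it is $f(e\mid M\cup(Q-e))\le w_e$. Either way the decrease is at most $w_e$.
\item If $e\notin Q$ and $e$ is added, the computation in Lemma~\ref{lem:residual-decrease}(ii) gives a decrease of at most $f(e\mid M)+f(o\mid M)\le 2w_e$. The last inequality holds because $o$ is unprocessed and feasible, so greedy maximality gives $f(o\mid M)\le w_e$. If $e$ is not added, $\Psi$ is unchanged.
\end{itemize}
Hence the expected decrease is at most $\max\{1,2q\}\,w_e=w_e$, and this is precisely where the hypothesis $q\le 1/2$ is used. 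Summing over the steps and taking expectations gives
\[
f(O)\le \mathbb E\Bigl[\sum w_e\Bigr]=\frac1q\,\mathbb E[f(\operatorname{Greedy}(R))].
\]

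I expect the main obstacle to be the case $e\notin Q$ added to $M$. There one must check that a single exchange suffices, and that the removed $o$ has marginal at most $w_e$ at that moment. Both facts follow from the invariant and the greedy choice exactly as in Lemmas~\ref{lem:Q-invariant} and~\ref{lem:residual-decrease}.
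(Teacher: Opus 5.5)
Your proposal is correct and follows essentially the same argument as the paper: the same lazily revealed simulation of $\operatorname{Greedy}(R)$, the same comparison set $Q\subseteq O$ maintained by single exchanges, the same potential $\Psi=f(M\cup Q)-f(M)$, and the same per-step bounds $w_e$ (for $e\in Q$) and $2w_e$ (for $e\notin Q$ sampled), combined via $2q\le 1$. Your rank-form invariant $\rank(M\cup Q)=\rank(M)+|Q|$ is equivalent to the paper's requirement that $M\cup Q$ be independent, since $M$ is always independent; the only cosmetic point is that you should either treat the trivial case $q=0$ separately, as the paper does, or state the conclusion as $qf(O)\le\mathbb E[f(\operatorname{Greedy}(R))]$ rather than dividing by $q$.
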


\begin{proof}
The claim is immediate when $q=0$, so assume $q>0$. Reveal the membership bits of $R$ only as needed. Maintain $M$ and $Q\subseteq O\setminus M$ such that $M\cup Q$ is independent and every element of $Q$ is unprocessed. Initially $M=\varnothing$ and $Q=O$. At each step, choose an unprocessed element $e$ that is feasible for $M$ and has maximum positive marginal value
\[
 w_e=f(e\mid M),
\]
breaking ties by the fixed label order. Then reveal whether $e\in R$.

If $e\in Q$, remove it from $Q$ and add it to $M$ if it belongs to $R$. If $e\notin Q$ and $e\in R$, add $e$ to $M$. When necessary, remove one element $o\in Q$ to restore independence of $M\cup Q$. To see that one removal suffices, start with the independent sets $M+e$ and the old $M\cup Q$ and repeatedly apply the matroid augmentation property. This gives an independent set containing $M+e$ and all but at most one element of the old $Q$. Every possible removed element $o$ is unprocessed and feasible for the old $M$, so
\[
 f(o\mid M)\le w_e.
\]
If $e\notin Q$ and $e\notin R$, neither set changes. In every case, mark $e$ as processed.

The resulting $M$ has the same distribution as $\operatorname{Greedy}(R)$. Indeed, between two insertions into $M$, every processed element is nonsampled, and the next sampled element encountered is the maximum-marginal feasible sampled element. An element that becomes infeasible remains infeasible as $M$ grows.

Consider
\[
 \Psi=f(M\cup Q)-f(M)\ge0.
\]
For a step processing $e$, let $D_e$ be the value of $\Psi$ before the step minus its value afterward. Condition on the complete history immediately before revealing the membership bit of $e$. Under this conditioning, $M,Q,e,w_e$ are fixed and $\Pr[e\in R]=q$.

If $e\in Q$ and is sampled, the set $M\cup Q$ does not change and $D_e=w_e$. If $e\in Q$ and is not sampled, then
\[
 D_e=f(e\mid M\cup(Q-e))\le w_e.
\]
If $e\notin Q$ and is sampled, the increase in $f(M)$ is $w_e$. Any loss from removing $o$ from $Q$ is at most $f(o\mid M)\le w_e$, while adding $e$ to the union cannot decrease its value. Thus $D_e\le2w_e$. The remaining case has $D_e=0$. Consequently,
\[
 \mathbb E[D_e\mid\text{history}]
 \le w_e
 =\frac1q\mathbb E[\text{increase in }f(M)\mid\text{history}],
\]
where the bound for $e\notin Q$ uses $2q\le1$.

At termination, every element remaining in $Q$ is feasible for $M$ and has zero marginal value with respect to $M$. Submodularity then gives $f(M\cup Q)=f(M)$, so the final potential is zero. Summing the conditional inequalities over the at most $N$ processed steps and taking expectations yields
\[
 f(O)
 =\mathbb E\!\left[\sum_eD_e\right]
 \le\frac1q\mathbb E[f(M)].
\]
This proves the claim.
\end{proof}

The following construction realizes almost the same value guarantee using only a random prefix.

\begin{samepage}
\begin{lemma}[An independent set from a random prefix]\label{lem:prefix-value-appendix}
For a slot $i>4N/9$, set
\[
 q_i=\min\left\{\frac iN,\frac12\right\}-\frac1{500}.
\]
Independently of the arrivals and all other slots, draw $K_i\sim\operatorname{Bin}(N,q_i)$. If $K_i>i$, set $B_i=\varnothing$. Otherwise choose a uniformly random $K_i$-element subset $R_i\subseteq H_i$ and let $B_i=\operatorname{Greedy}(R_i)$. Then $B_i$ is independent and
\begin{equation}\label{eq:app-prefix-value}
 \mathbb E[f(B_i)]
 \ge\left(\min\left\{\frac iN,\frac12\right\}-\frac1{250}\right)f(O).
\end{equation}
\end{lemma}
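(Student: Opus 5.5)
Independence of $B_i$ is immediate: $B_i$ is either empty or the output of submodular greedy, and greedy only adds feasible elements. For the value bound, the plan is to couple $R_i$ with an independent Bernoulli-$q_i$ sample of the extended ground set and then invoke Lemma~\ref{lem:sample-greedy-appendix}, which applies because $q_i\le 1/2-1/500<1/2$.

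\textbf{Coupling.} Since the arrival order is a uniformly random permutation of the extended ground set, $H_i$ is a uniformly random $i$-subset. Conditional on $K_i=k\le i$, the set $R_i$ is therefore a uniformly random $k$-subset of the extended ground set. Let $R$ be an independent Bernoulli-$q_i$ sample of all $N$ elements, and write $K=|R|\sim\operatorname{Bin}(N,q_i)$. Conditional on $|R|=k$, the set $R$ is also a uniformly random $k$-subset. Hence the random set obtained by taking $R$ when $|R|\le i$ and $\varnothing$ otherwise has the same distribution as $R_i$. Since $f\ge 0$ and $f(O)$ bounds $f$ on every independent set, $f(\operatorname{Greedy}(R))\le f(O)$. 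This gives
\[
 \mathbb E[f(B_i)]\ge \mathbb E[f(\operatorname{Greedy}(R))]-f(O)\Pr[K>i]\ge \bigl(q_i-\Pr[K>i]\bigr)f(O),
\]
where the second step uses Lemma~\ref{lem:sample-greedy-appendix}. It therefore suffices to prove $\Pr[K>i]\le 1/500$.

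\textbf{Tail bound (main obstacle).} The mean of $K$ is $Nq_i\le i-N/500$, so the event $K>i$ requires a deviation of at least $N/500$ above the mean. I would apply Hoeffding's inequality, which gives
\[
 \Pr[K>i]\le \exp\bigl(-2(N/500)^2/N\bigr)=\exp(-2N/250000).
\]
Because $N\ge 31250000$, the exponent is at least $250$, so the bound is far below $1/500$. Chebyshev also works: the variance is at most $N/4$, so $\Pr[K>i]\le (N/4)\cdot 500^2/N^2$, which is at most $1/500$ once $N\ge 31250000$. This is presumably where that constant comes from.

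The hypothesis $i>4N/9$ is not needed for this argument except to ensure $q_i>0$. With the tail bound in hand, $q_i-1/500=\min\{i/N,1/2\}-1/250$, which is \eqref{eq:app-prefix-value}. The only step needing care is the coupling, namely that conditioning on the size makes both sampling schemes uniform over subsets of that size.
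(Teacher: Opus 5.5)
Your proposal is correct and follows the paper's proof essentially step for step. You use the same identification of $R_i$, conditional on its size $k\le i$, with a global Bernoulli-$q_i$ sample conditioned on having size $k$; the same loss of at most $\Pr[K_i>i]\,f(O)$ from setting $B_i=\varnothing$; and the same Chebyshev bound $62500/N\le 1/500$, which the paper uses. Your Hoeffding alternative is also valid and gives a much stronger tail bound, though it is not needed.
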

\end{samepage}

\begin{proof}
Condition on $K_i=k\le i$. Since $H_i$ is a uniformly random $i$-subset of the extended ground set and $R_i$ is a uniformly random $k$-subset of $H_i$, the set $R_i$ is a uniformly random $k$-subset of the whole ground set. Its distribution therefore agrees with that of a global independent Bernoulli-$q_i$ sample $\widetilde R_i$, conditional on $|\widetilde R_i|=k$. Lemma~\ref{lem:sample-greedy-appendix} gives
\[
 \mathbb E[f(\operatorname{Greedy}(\widetilde R_i))]\ge q_if(O).
\]
Each greedy set is independent and has value at most $f(O)$. Setting $B_i=\varnothing$ when $K_i>i$ therefore loses at most $\Pr[K_i>i]f(O)$. Moreover,
\[
 i-\mathbb E[K_i]\ge\frac N{500},
 \qquad
 \operatorname{Var}(K_i)\le\frac N4.
\]
By Chebyshev's inequality,
\[
 \Pr[K_i>i]
 \le\frac{N/4}{(N/500)^2}
 =\frac{62500}{N}
 \le\frac1{500}.
\]
It follows that
\[
 \mathbb E[f(B_i)]
 \ge\left(q_i-\frac1{500}\right)f(O),
\]
which is~\eqref{eq:app-prefix-value}.
\end{proof}

For the conditioning arguments below, represent the randomness at slot $i$ by an independent seed $\eta_i$. It specifies $K_i$ and a uniformly random permutation of the indices $1,\ldots,i$. List $H_i$ in the fixed label order and use the first $K_i$ indices of this permutation to choose $R_i$. Thus $B_i$ is a deterministic function of the set $H_i$ and $\eta_i$, independently of the order within $H_i$.


\subsection{Laminar matroids}\label{app:laminar}

Set $\ell=N/2$ and reject the first $\ell$ slots. At each later real arrival $e_i$, compute $B_i$ using Lemma~\ref{lem:prefix-value-appendix}. Call $e_i$ a \emph{candidate} if $e_i\in B_i$. Accept a candidate exactly when adding it to the current accepted set preserves independence, and reject every other arrival. 

The laminar sets containing a fixed element form a chain. We use the following counting lemma to bound the probability that any constraint in this chain prevents acceptance.
\revision{The lemma gives a finite-slot form of the counting identity in B\'erczi et al.~\cite{berczilabeling2025}. We give an alternative proof using cyclic parking.}

\begin{samepage}
\begin{lemma}[A counting bound for nested capacities]\label{lem:parking-appendix}
Let $\ell\ge1$ and $0\le m\le\ell$ be integers. Let
\[
 U_j\sim\operatorname{Unif}\{1,\ldots,j\},
 \qquad j=\ell+1,\ldots,\ell+m,
\]
be independent. Then
\begin{equation}\label{eq:app-parking}
 \Pr\!\left[\exists b\in\{1,\ldots,m\}:
   |\{j:U_j\le b\}|\ge b\right]
 =\sum_{j=\ell+1}^{\ell+m}\frac1j.
\end{equation}
\end{lemma}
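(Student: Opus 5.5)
The plan is to write the event as a disjoint union indexed by the step at which it first occurs, and to show that the $k$-th piece has probability exactly $1/(\ell+k)$. Summing over $k=1,\ldots,m$ then gives~\eqref{eq:app-parking}.

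\textbf{Decomposition.} For $0\le k\le m$, let $F_k$ be the event that some $b\in\{1,\ldots,k\}$ satisfies $|\{j\le \ell+k:U_j\le b\}|\ge b$, so that $F_0=\varnothing$. The event in~\eqref{eq:app-parking} equals $F_m$. For $k\le m$, indices $j>\ell+k$ never help a threshold $b\le k$ fail, but they can only add to the counts. So I will first show that the event is monotone in the sense needed: once some threshold is met, it stays met as more variables are added. It then suffices to prove
\[
\Pr[F_k\setminus F_{k-1}]=\frac1{\ell+k}.
\]

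\textbf{One-step analysis.} On $F_{k-1}^c$, write $c_b=|\{j<\ell+k:U_j\le b\}|$, so that $c_b\le b-1$ for all $b\le k-1$.
\begin{itemize}
\item Adding $U_{\ell+k}$ creates a violation at threshold $b\le k$ exactly when $c_b=b-1$ and $U_{\ell+k}\le b$, or when $b=k$ and $c_k\ge k$.
\item Since only $k-1$ earlier variables can lie in $\{1,\ldots,k\}$, we have $c_k\le k-1$, so the second case never occurs.
\item Hence $F_k\setminus F_{k-1}$ is the event $U_{\ell+k}\le b^\ast$, where $b^\ast$ is the largest "tight" threshold $b\le k$ with $c_b=b-1$. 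If no threshold is tight, set $b^\ast=0$.
\end{itemize}
Because $U_{\ell+k}$ is uniform on $\{1,\ldots,\ell+k\}$ and independent of the earlier variables,
\[
\Pr[F_k\setminus F_{k-1}]=\frac{\mathbb E[b^\ast;F_{k-1}^c]}{\ell+k}.
\]
The claim therefore reduces to the identity
\[
\mathbb E[b^\ast\,\mathbf 1_{F_{k-1}^c}]=1.
\]

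\textbf{The main obstacle.} Proving $\mathbb E[b^\ast\,\mathbf 1_{F_{k-1}^c}]=1$ is where the cyclic-parking idea enters. I would view the $k-1$ earlier variables as cars with preferred spots, placed on a cycle of $\ell+k$ spots. Each car $j$ parks at the first free spot at or after $U_j$, cyclically.
\begin{itemize}
\item The condition $F_{k-1}^c$ together with $b^\ast$ tight should mean that the spots $1,\ldots,b^\ast-1$ are all occupied by cars preferring them, while spot $b^\ast$ is free and "starts" the configuration.
\item Rotating the cycle so that each free spot in turn becomes position $1$ should give a weighted bijection. Summing $b^\ast$ over a rotation class should then recover that class's total probability mass.
\item The non-identical ranges $\{1,\ldots,j\}$ complicate a naive Pollak rotation. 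I would therefore first try to encode $(U_j)$ as a uniformly random permutation of $\{1,\ldots,\ell+k-1\}$ via insertion positions. Under this encoding, $c_b$ and the tight thresholds become statements about relative ranks of the last $k-1$ inserted items.
\item If that encoding works, $\mathbb E[b^\ast;F_{k-1}^c]=1$ should follow from a record-type symmetry, namely that each of the $b$ possible "leaders" is equally likely.
\end{itemize}
As a fallback, I would prove the identity by induction on $k$ using the same conditioning on the last variable. The base case $k=1$ holds because $b^\ast=1$ always, giving $\Pr[F_1]=1/(\ell+1)$.
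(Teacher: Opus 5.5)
Your first-occurrence reduction is sound. $F_{k-1}\subseteq F_k$, the threshold $b=1$ is always tight on $F_{k-1}^c$, and since $b^\ast\le k<\ell+k$ you do get $\Pr[F_k\setminus F_{k-1}]=\mathbb E[b^\ast\mathbf 1_{F_{k-1}^c}]/(\ell+k)$. The trouble is that the identities $\mathbb E[b^\ast\mathbf 1_{F_{k-1}^c}]=1$ for $1\le k\le\ell$ are, by telescoping, equivalent to the lemma itself. The decomposition therefore removes none of the difficulty, and you never prove these identities: everything after ``The main obstacle'' describes what ``should'' happen.

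Concretely, each of your three routes is left open:
\begin{itemize}
\item The cyclic picture on $\ell+k$ spots has no rotation invariance, because the ranges $\{1,\dots,j\}$ differ, as you note yourself.
\item The insertion encoding is never carried out.
\item The induction fallback has no usable hypothesis. The scalar identity at step $k$ is too weak, because $b^\ast$ at step $k+1$ depends on the whole profile $(c_b)_b$, not only on $b^\ast$ at step $k$ and $U_{\ell+k}$.
\end{itemize}
Checking $k=1$ (or $k=2,3$ by hand) does not substitute for a general argument.

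The idea you are missing is to discard the large values instead of fighting the non-uniform ranges. Call $j$ active if $U_j\le m$. Inactive values cannot affect any threshold $b\le m$. Since $m<j$, conditional on the active index set, the active values are i.i.d.\ uniform on $\{1,\dots,m\}$, which restores full symmetry. Now park the $Z\le m$ active cars on $m$ cyclically ordered spaces. Each car moves in the \emph{decreasing} cyclic direction from its preferred space to the first empty one.

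The event in~\eqref{eq:app-parking} holds exactly when space $1$ ends up occupied:
\begin{itemize}
\item If $b$ cars prefer spaces $\le b$ and space $1$ stays empty, those cars must all fit into $\{2,\dots,b\}$, which is impossible.
\item Conversely, if every count is $<b$, the sorted preferences satisfy $p_{(t)}\ge t+1$. So an assignment into $\{2,\dots,m\}$ exists, and the decreasing greedy rule finds one without ever reaching space $1$.
\end{itemize}
By rotation invariance every space is equally likely to be occupied, and exactly $Z$ are occupied. Hence the conditional probability is $Z/m$, and the probability of the event is $\mathbb E[Z]/m=\frac{1}{m}\sum_{j=\ell+1}^{\ell+m}\frac{m}{j}$.

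This is the paper's proof. It yields the whole sum at once, so neither the first-occurrence decomposition nor any analysis of $b^\ast$ is needed. In hindsight, applying it with $m=k$ and $m=k-1$ and subtracting would also establish your identity for $\mathbb E[b^\ast\mathbf 1_{F_{k-1}^c}]$.
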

\end{samepage}

\begin{proof}
For $m=0$, both sides are zero. Assume $m>0$. Call an index $j$ active if $U_j\le m$, and let $Z$ be the number of active indices. Conditional on the set of active indices, their values are independent and uniform on $\{1,\ldots,m\}$. Values at inactive indices cannot affect the event in~\eqref{eq:app-parking}.

Given $Z=a$ active values, regard them as the preferred positions of $a$ cars on $m$ cyclically ordered parking spaces. Process the cars in a fixed order, moving each car in the decreasing cyclic direction until it finds an empty space. Exactly $a$ spaces become occupied, so rotation symmetry gives probability $a/m$ that space $1$ is occupied.

Space $1$ remains empty exactly when
\[
 |\{\text{preferences at most }b\}|<b
 \qquad(1\le b\le m).
\]
Necessity follows because every car with preference at most $b$ must then park in one of spaces $2,\ldots,b$. For sufficiency, sort the preferences as $p_1\le\cdots\le p_a$. The displayed inequalities imply $p_k\ge k+1$, so assigning car $k$ to space $k+1$ gives a feasible assignment in spaces $2,\ldots,m$. The decreasing greedy parking rule also succeeds whenever such an assignment exists: assigning a car to the largest available space no greater than its preference preserves a feasible assignment for the remaining cars, by swapping with the car assigned to that space when necessary. Hence this rule never reaches space $1$.

Thus the event in~\eqref{eq:app-parking} is exactly that space $1$ is occupied. Its probability, conditional on the active index set, is $Z/m$. Taking expectations gives
\[
 \Pr[\text{event}]
 =\frac{\mathbb E[Z]}m
 =\frac1m\sum_{j=\ell+1}^{\ell+m}\frac mj
 =\sum_{j=\ell+1}^{\ell+m}\frac1j.
\]
\end{proof}

\begin{lemma}[Conditional feasibility of a laminar candidate]\label{lem:laminar-feasible-appendix}
Fix $i>\ell$ and condition on $H_i$, the identity of $e_i$, all future elements $e_{i+1},\ldots,e_N$, and all seeds $\eta_j$. If $e_i\in B_i$, then
\begin{equation}\label{eq:app-lam-feasibility}
 \Pr[e_i\text{ is feasible for the current accepted set}
       \mid H_i,e_i,e_{i+1},\ldots,e_N,(\eta_j)_j]
 \ge1-\sum_{j=\ell+1}^{i-1}\frac1j.
\end{equation}
The remaining randomness is the uniformly random order of $H_i-e_i$ in the first $i-1$ slots.
\end{lemma}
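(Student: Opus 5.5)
The plan is to reduce infeasibility of $e_i$ to the event in Lemma~\ref{lem:parking-appendix}, with $\ell=N/2$ and $m=i-1-\ell$. Note that $m\le N-\ell=\ell$, as that lemma requires. This uses uniform labels $U_j$ on $\{1,\ldots,j\}$ for the slots $j=\ell+1,\ldots,i-1$.

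\textbf{Step 1 (the chain and effective capacities).} Let $F_1\subsetneq F_2\subsetneq\cdots\subsetneq F_t$ be the laminar sets containing $e_i$. Under the conditioning, $e_i$ is fixed, so this chain is fixed. Put $c_k=\min_{k'\ge k}b_{F_{k'}}$; this is nondecreasing in $k$ and satisfies $c_k\le b_{F_k}$. Since $e_i\in B_i$ and $B_i$ is independent, every $b_{F_k}\ge1$, hence every $c_k\ge1$. The accepted set is always independent. So $e_i$ is infeasible only if some $F_k$ already contains $b_{F_k}\ge c_k$ accepted elements, all of them candidates $e_j$ from slots $\ell<j<i$.

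\textbf{Step 2 (backward exposure and labels).} Given $H_i$ and $e_i$, the remaining order can be generated backward. The set $H_{i-1}=H_i-e_i$ is fixed, $e_{i-1}$ is uniform in $H_{i-1}$, then $e_{i-2}$ is uniform in $H_{i-2}=H_{i-1}-e_{i-1}$, and so on. Each choice is independent of the previous ones given the current set. Since $B_j$ is a deterministic function of the set $H_j$ and $\eta_j$, it is fixed before $e_j$ is drawn.

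For each $j$, I then build a bijection $\lambda_j:H_j\to\{1,\ldots,j\}$ depending only on $H_j$ and $\eta_j$:
\begin{itemize}[nosep]
\item List the elements of $B_j\cap F_t$ by the index of the innermost chain set containing them.
\item Give them the labels $1,2,\ldots$ in that order.
\item Give the remaining elements of $H_j$ the remaining labels arbitrarily.
\end{itemize}
Because $B_j$ is independent, $|B_j\cap F_k|\le b_{F_{k'}}$ for all $k'\ge k$, hence $|B_j\cap F_k|\le c_k$. So every element of $B_j\cap F_k$ receives a label at most $c_k$. Setting $U_j=\lambda_j(e_j)$, the backward description makes the $U_j$ independent and uniform on $\{1,\ldots,j\}$.

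\textbf{Step 3 (conclusion).} If $e_j$ is a candidate lying in $F_k$, then $U_j\le c_k$. Hence infeasibility implies that for some $k$,
\[
 |\{j\in(\ell,i):U_j\le c_k\}|\ge c_k .
\]
If $c_k>m$, this event is impossible, since only $m$ slots are available. Otherwise $b=c_k\in\{1,\ldots,m\}$, and the event is contained in the event of~\eqref{eq:app-parking}. Slots holding dummy or non-candidate elements only make the count larger, so they are harmless. Lemma~\ref{lem:parking-appendix} then bounds the failure probability by $\sum_{j=\ell+1}^{i-1}1/j$, which gives~\eqref{eq:app-lam-feasibility}.

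The main obstacle is Step 2: justifying that the labels are independent and uniform under the stated conditioning. This requires $B_j$ to be independent of the position of $e_j$ within $H_j$ and of the order of earlier slots, which is exactly what the seed representation provides. One must also check that conditioning on the future slots $e_{i+1},\ldots,e_N$ does not interfere; it does not, since these slots only fix $H_i$.
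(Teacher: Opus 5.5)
Your proof is correct and follows the paper's argument. Both reveal the earlier order backwards, label $H_j$ by a bijection that depends only on $H_j$ and $\eta_j$ so that candidates in a chain set get labels at most that set's capacity, and reduce to Lemma~\ref{lem:parking-appendix}. Your effective capacities $c_k=\min_{k'\ge k}b_{F_{k'}}$, with sorting by innermost chain index, are an explicit form of the paper's sort key $\min\{b_k:x\in B_j\cap F_k\}$.
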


\begin{proof}
List the laminar sets containing $e_i$ as a chain
\[
 F_1\subsetneq\cdots\subsetneq F_t,
\]
and write $b_k=b_{F_k}$. Each $b_k$ is positive because $e_i\in B_i$ and $B_i$ is independent. If the chain is empty, the element is always feasible and the claim is immediate. The capacities need not be strictly increasing.

Reveal the earlier order backwards. At reverse step $j$, the set $H_j$ and seed $\eta_j$ are determined, while $e_j$ is uniformly distributed over $H_j$. In particular, $B_j$ is already determined. For each $x\in H_j$, assign the integer
\[
 \min\{b_k:x\in B_j\cap F_k\},
\]
or $\infty$ if this set is empty. For every positive integer $b$, at most $b$ elements receive an integer at most $b$. Indeed, all these elements lie in $B_j$ and in the largest chain set whose capacity is at most $b$, and $B_j$ satisfies that constraint.

Sort $H_j$ by these integers, breaking ties by the fixed label order, and let $U_j$ be the position of $e_j$ in this sorted list. Conditional on everything revealed so far in the reverse order, $U_j$ is uniform on $\{1,\ldots,j\}$. Its conditional law does not depend on the revealed later positions, so the variables $U_j$ are mutually independent. If $e_j$ is a candidate in a chain set of capacity $b$, then $U_j\le b$.

If $e_i$ is infeasible, some chain set of capacity $b$ already contains $b$ accepted elements. Every such element is an earlier candidate, so at least $b$ indices $j\in\{\ell+1,\ldots,i-1\}$ satisfy $U_j\le b$. There are only $m=i-\ell-1\le\ell$ earlier slots after the initial rejection phase, so only capacities $b\le m$ can block $e_i$. Lemma~\ref{lem:parking-appendix} bounds the probability of this event by
\[
 \sum_{j=\ell+1}^{i-1}\frac1j,
\]
proving~\eqref{eq:app-lam-feasibility}.
\end{proof}

\begin{theorem}[Laminar guarantee]\label{thm:laminar-appendix}
For every laminar matroid and every nonnegative monotone submodular objective, the online algorithm above has competitive ratio at most
\[
 C_{\mathrm{lam}}=\frac{31250}{4743}\approx6.589.
\]
It uses $O(n^2r)$ value queries and independence queries.
\end{theorem}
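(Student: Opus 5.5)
The plan is to write $\mathbb E[f(A)]$ as a sum of marginal contributions over the accepted arrivals and lower bound each slot's contribution separately. At slot $i>\ell$, the arrival $e_i$ is accepted exactly when it is a candidate, $e_i\in B_i$, and it is feasible for the current accepted set $A_{i-1}$. Its contribution is $f(e_i\mid A_{i-1})$. By submodularity, $f(e_i\mid A_{i-1})\ge f(e_i\mid B_i^{<e_i})$ would be the ideal comparison, but $A_{i-1}$ is not a subset of $B_i$. I will instead use the standard backward-revealing trick, in the spirit of Lemma~\ref{lem:laminar-feasible-appendix}, on the value as well.

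\textbf{Step 1 (slot value of the candidate set).} Condition on $H_i$ and $\eta_i$. Then $B_i$ is fixed and $e_i$ is uniform on $H_i$. Order $B_i$ by its greedy insertion order. The expected greedy marginal of $e_i$ within $B_i$, namely $\mathbb E[\mathbf 1\{e_i\in B_i\}f(e_i\mid \text{greedy prefix of }B_i)]$, equals $f(B_i)/i$. Lemma~\ref{lem:prefix-value-appendix} then gives at least $\frac1i(\min\{i/N,1/2\}-1/250)f(O)=\frac1i(1/2-1/250)f(O)$ for $i>\ell$.

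\textbf{Step 2 (feasibility).} Lemma~\ref{lem:laminar-feasible-appendix} shows that, conditional on $H_i,e_i$, the future, and the seeds, feasibility holds with probability at least $1-\sum_{j=\ell+1}^{i-1}1/j$. This probability is over the order of $H_i-e_i$, which does not affect $B_i$ or $e_i$'s greedy marginal.

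\textbf{Step 3 (value loss from $A_{i-1}$).} The main obstacle is replacing $f(e_i\mid A_{i-1})$ by the greedy marginal. I would bound the loss through a charging argument rather than pointwise. Write $f(A)\ge \sum_i [\text{greedy marginal of }e_i\text{ in }B_i]\cdot\mathbf 1\{\text{accepted}\} - (\text{overlap})$. Alternatively, use $f(A)\ge f(A\cup G)-\ldots$. The cleaner route I would try first is $f(e_i\mid A_{i-1})\ge f(e_i\mid A_{i-1}\cup P_i)$, where $P_i$ is the greedy prefix, together with $f(e_i\mid A_{i-1}\cup P_i)\ge f(e_i\mid P_i)-\sum_{a\in A_{i-1}}(\ldots)$. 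I would then absorb the negative part by an averaging argument over the uniform position of earlier accepted elements, costing a factor comparable to the feasibility loss. If this fails, I would fall back on the Feldman--Zenklusen style bound using $f(A)\ge \frac12\mathbb E[\ldots]$. That fallback would not give $6.589$, so some form of the averaging is needed.

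\textbf{Step 4 (summing).} Sum $\frac{0.496}{i}\bigl(1-\sum_{j=\ell+1}^{i-1}\frac1j\bigr)f(O)$ over $\ell<i\le N$ with $\ell=N/2$. This approximates $0.496\int_{1/2}^1\frac{1-\ln(2x)}{x}\,dx=0.496\bigl(\ln2-\tfrac12(\ln 2)^2\bigr)$. Explicit Riemann-sum error terms for $N\ge 31250000$ should yield $4743/31250$. For the query count, each of the $O(n)$ real arrivals runs greedy on $R_i$ with $O(nr)$ queries, giving $O(n^2r)$ in total.
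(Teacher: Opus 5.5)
Your proposal has a genuine gap at Step~3, and that step is the whole difficulty, not a technicality.

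Steps~1, 2 and~4 together assume that slot $i$ contributes at least $q_i^{\mathrm{lam}}\,\mathbb E[f(B_i)]/i$, and that these contributions can be summed. They cannot be summed, because the sets $B_i$ for different slots overlap heavily. Take the free matroid, $f(S)=\min\{|S|,1\}$, and $n$ a large multiple of $18$, so that $N=n$. Then $B_i$ is a single element of value $1$. Given everything revealed later, $e_i$ is a candidate with probability at most $1/i$, and every candidate is accepted. Hence $\Pr[A=\varnothing]\ge\prod_{i=\ell+1}^{N}(1-\tfrac1i)=\tfrac12$, so $\mathbb E[f(A)]\le\tfrac12$. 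Meanwhile $\sum_{i>\ell}\mathbb E[f(B_i)]/i\approx\ln 2$. So replacing $f(e_i\mid A_{i-1})$ by the greedy marginal of $e_i$ in $B_i$ loses a constant factor. Your Step~3 only lists possible remedies and carries none of them out.

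There is also a conditioning problem. Lemma~\ref{lem:laminar-feasible-appendix} randomizes over the order of $H_i-e_i$, which is exactly what determines $A_{i-1}$. A value term measured against $A_{i-1}$ is therefore not fixed under that conditioning, and you cannot simply multiply it by the feasibility bound.

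The numbers show the same problem. Your Step~4 sum tends to $0.496(\ln2-\tfrac12\ln^2 2)\approx0.225$, not to $4743/31250\approx0.152$. The theorem's constant is $\tfrac{62}{125}\cdot\tfrac{153}{500}$, which comes from $\tfrac{62}{125}(1-\ln 2)$; the gap between the two is the overlap loss your accounting omits.

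The missing idea is to measure each candidate against the \emph{future} candidates rather than the past accepted set. Let $T_{>i}$ be the set of candidates at slots $j>i$. Put $d_i=f(e_i\mid T_{>i})$ if $e_i\in B_i$, and $d_i=0$ otherwise. Adding the accepted elements in decreasing slot order gives $f(A)\ge\sum_{i>\ell}\mathbf 1\{e_i\in A\}d_i$.

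Now $T_{>i}$ depends only on $H_i$, the future elements and the seeds. So $d_i$ is fixed once $e_i$ is also revealed, and Lemma~\ref{lem:laminar-feasible-appendix} applies multiplicatively. This gives $\mathbb E[f(A)]\ge\sum_i q_i^{\mathrm{lam}}x_i$ with $x_i=\mathbb E[d_i]$.

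The overlap is then handled explicitly rather than ignored. Averaging over $e_i$ uniform in $H_i$ and using $f(T_{>i})=\sum_{j>i}d_j$ gives the recurrence $x_i+\tfrac1i\sum_{j>i}x_j\ge\mathbb E[f(B_i)]/i$. Weighting these inequalities by the nonnegative coefficients $a_i=\tfrac{2\ell}{i-1}-1$ turns the left side into exactly $\sum_i q_i^{\mathrm{lam}}x_i$. This yields $\mathbb E[f(A)]\ge\tfrac{62}{125}\bigl(1-\sum_{i=N/2+1}^{N}\tfrac1i\bigr)f(O)\ge\tfrac{62}{125}(1-\ln2)f(O)$. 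The stated constant follows from $\ln 2<\tfrac{347}{500}$.

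Your query count is fine.
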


\begin{proof}
Let $A$ be the final accepted set. For each $i>\ell$, define the candidate set at slot $i$ and the set of future candidates by
\[
 X_i=
 \begin{cases}
  \{e_i\},&e_i\in B_i,\\
  \varnothing,&e_i\notin B_i,
 \end{cases}
 \qquad
 T_{>i}=\bigcup_{j>i}X_j.
\]
Let
\[
 d_i=f(T_{>i}\cup X_i)-f(T_{>i}),
 \qquad x_i=\mathbb E[d_i].
\]
These quantities are used only in the analysis, and $d_i\ge0$ by monotonicity. For every realization, add the accepted elements in decreasing slot order. The previously added accepted elements form a subset of $T_{>i}$, so submodularity gives
\begin{equation}\label{eq:app-lam-output}
 f(A)\ge\sum_{i>\ell}\mathbf1\{e_i\in A\}d_i.
\end{equation}

Condition on $H_i$, all future elements, and all seeds, but not on $e_i$. This fixes $B_i$ and $T_{>i}$: every future prefix set is determined by $H_i$ and the future elements, regardless of the order within $H_i$. The current element $e_i$ is still uniform in $H_i$. Therefore
\[
 \mathbb E[d_i\mid H_i,e_{i+1},\ldots,e_N,(\eta_j)_j]
 =\frac1i\sum_{e\in B_i}f(e\mid T_{>i})
 \ge\frac{f(B_i)-f(T_{>i})}{i}.
\]
The inequality follows from submodularity and monotonicity. Since
\[
 f(T_{>i})=\sum_{j>i}d_j,
\]
taking expectations gives
\begin{equation}\label{eq:app-lam-recurrence}
 x_i+\frac1i\sum_{j>i}x_j
 \ge\frac{\mathbb E[f(B_i)]}{i}.
\end{equation}
Because $i>N/2$, Lemma~\ref{lem:prefix-value-appendix} gives
\begin{equation}\label{eq:app-lam-prefix}
 \mathbb E[f(B_i)]\ge\frac{62}{125}f(O).
\end{equation}

Now also condition on $e_i$. If it is a candidate, $d_i$ is fixed under this conditioning, and Lemma~\ref{lem:laminar-feasible-appendix} gives acceptance probability at least
\[
 q_i^{\mathrm{lam}}=1-\sum_{j=\ell+1}^{i-1}\frac1j.
\]
If it is not a candidate, $d_i=0$. Since $d_i\ge0$, multiplying by the conditional acceptance probability, taking expectations, and summing~\eqref{eq:app-lam-output} yields
\begin{equation}\label{eq:app-lam-acceptance}
 \mathbb E[f(A)]\ge\sum_{i=\ell+1}^{N}q_i^{\mathrm{lam}}x_i.
\end{equation}

Set
\[
 a_i=\frac{2\ell}{i-1}-1
 \qquad(i=\ell+1,\ldots,N).
\]
These coefficients are nonnegative because $N=2\ell$. Using
\[
 \frac{a_j}{j}=\frac{2\ell}{j(j-1)}-\frac1j,
\]
we obtain
\begin{equation}\label{eq:app-lam-coefficients}
 a_i+\sum_{j=\ell+1}^{i-1}\frac{a_j}{j}
 =q_i^{\mathrm{lam}},
 \qquad
 \sum_{i=\ell+1}^{N}\frac{a_i}{i}
 =1-\sum_{i=\ell+1}^{N}\frac1i.
\end{equation}
Multiplying~\eqref{eq:app-lam-recurrence} by $a_i$ and summing shows that its left-hand side is exactly $\sum_iq_i^{\mathrm{lam}}x_i$. Thus~\eqref{eq:app-lam-prefix}--\eqref{eq:app-lam-coefficients} give
\begin{equation}\label{eq:app-lam-value}
 \mathbb E[f(A)]
 \ge\frac{62}{125}\left(1-\sum_{i=N/2+1}^{N}\frac1i\right)f(O)
 \ge\frac{62}{125}(1-\log2)f(O).
\end{equation}
The last inequality uses
\[
 \sum_{i=N/2+1}^{N}\frac1i
 \le\int_{N/2}^{N}\frac{dt}{t}=\log2.
\]

Finally,
\[
 \log2
 =2\sum_{k\ge0}\frac{1}{(2k+1)3^{2k+1}}
 <\frac{56}{81}+\frac1{540}
 =\frac{1123}{1620}
 <\frac{347}{500}.
\]
Here the first two terms sum to $56/81$, and the remaining terms are bounded by the geometric series
\[
 \frac25\sum_{k\ge2}\frac1{3^{2k+1}}=\frac1{540}.
\]
Consequently, including the case $f(O)=0$,
\[
 \mathbb E[f(A)]
 \ge\frac{62}{125}\cdot\frac{153}{500}f(O)
 =\frac{4743}{31250}f(O).
\]
This is the claimed competitive guarantee. The query bound follows from the shared subroutine and at most one additional independence query per real arrival.
\end{proof}

\subsection{Graphic matroids}\label{app:graphic}

Assume that the endpoints of each real edge are revealed when it arrives, and fix a label order on vertices. The dummy elements can be represented by self-loops at an auxiliary vertex. Set $\ell=4N/9$ and reject the first $\ell$ slots.

At each later real arrival $e_i$, compute the forest $B_i$ using Lemma~\ref{lem:prefix-value-appendix}. Root each nontrivial tree of $B_i$ at its smallest-labeled vertex and orient its edges away from the root. If $e_i\in B_i$, call it a candidate and use its orientation in $B_i$. The algorithm maintains a set of marked vertices, initially empty. Accept a candidate $e_i:u\to v$ exactly when neither $u$ nor $v$ is marked; upon acceptance, mark $v$. Reject all other arrivals.

\revision{This canonical orientation and acceptance rule are those of the graphic-matroid algorithm of Soto, Turkieltaub, and Verdugo~\cite{soto2021}. We replace their prefix optimum forest by the randomized submodular greedy forest $B_i$. }
\begin{lemma}[Feasibility of the graphic algorithm]\label{lem:graphic-feasible-appendix}
The accepted set is always a forest, including when the graph has parallel edges.
\end{lemma}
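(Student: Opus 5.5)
We prove the lemma by induction on the sequence of acceptances, using a cut-based invariant in place of a direct cycle argument. The invariant is the following.
\begin{quote}
\emph{The accepted set $A$ is a forest in which every connected component contains exactly one unmarked vertex. Moreover, every accepted edge $u\to v$ has its head $v$ marked, and $v$ was unmarked at the moment the edge was accepted.}
\end{quote}
The cleaner route is a counting argument. Each accepted edge marks a distinct vertex $v$: $v$ must be unmarked when the edge is accepted, and it becomes marked immediately afterward. Hence $|A|$ equals the number of marked vertices.

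Forestness then follows from a standard orientation fact. Orient every accepted edge $u\to v$ as in $B_i$. Each vertex is then the head of at most one accepted edge, since it is marked at most once. It remains to rule out a directed-or-undirected cycle $C$ inside $A$.

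Suppose $A$ contained a cycle $C$, and consider the last edge of $C$ to be accepted, say $e_i:u\to v$. At that moment both $u$ and $v$ were unmarked, and the rest of $C$ was already accepted. Every vertex of $C$ is incident to two edges of $C$. The cycle has $|C|$ edges and $|C|$ vertices, and each edge of $C$ marks its own head, with these heads pairwise distinct. Consequently every vertex of $C$ is the head of exactly one edge of $C$. In particular, $u$ is the head of some edge of $C$ other than $e_i$, because $e_i$'s head is $v\neq u$.

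That edge was accepted before $e_i$, so $u$ was already marked when $e_i$ arrived. This contradicts the acceptance rule, so no cycle exists.

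The degenerate cases need a separate check.
\begin{itemize}[label=--]
\item \emph{Parallel edges.} A $2$-cycle is covered by the same argument with $|C|=2$: the first edge marks one endpoint, so the parallel edge is rejected.
\item \emph{Self-loops.} These are never candidates, since $B_i$ is a forest. Dummy elements are always rejected.
\end{itemize}

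The main thing to verify carefully is the counting step that every vertex of $C$ is the head of exactly one cycle edge. It uses only two facts: heads of accepted edges are distinct, and a cycle has equally many edges and vertices. Neither depends on the orientations of different $B_i$'s agreeing.
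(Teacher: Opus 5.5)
Your proof is correct, but it takes a different route from the paper. The paper maintains an inductive invariant: every connected component of the accepted graph, isolated vertices included, is a tree with exactly one unmarked vertex. An accepted edge has two unmarked endpoints, so by the invariant it must join two different components. Its head then becomes marked while its tail stays unmarked, so the merged tree again has exactly one unmarked vertex. You instead argue by contradiction on a hypothetical cycle $C$. Heads of accepted edges are pairwise distinct and $|E(C)|=|V(C)|$, so the head map $E(C)\to V(C)$ is a bijection. Hence the tail $u$ of the last-accepted edge of $C$ is the head of an earlier edge of $C$ and was already marked. You correctly identify that the essential ingredient is the time ordering, not just in-degree at most one; in-degree at most one alone would still allow a directed cycle. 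Your argument needs no global invariant, treats parallel edges uniformly as $2$-cycles, and makes explicit that nothing depends on orientations from different $B_i$ agreeing. The paper's invariant yields more structure, namely that each accepted component is rooted at its unique unmarked vertex, and this makes the inductive step a one-line observation. One presentational point: you open by announcing an induction with a quoted ``cut-based'' component invariant, which is essentially the paper's invariant, but you never prove or use it. Your counting argument is self-contained, so either drop the quoted invariant or prove it.
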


\begin{proof}
Maintain the invariant that every connected component of the accepted graph is a tree with exactly one unmarked vertex, including isolated vertices. It holds initially. An accepted edge has two unmarked endpoints, which cannot belong to the same component by the invariant. It therefore joins two different trees. Its head becomes marked and its tail remains unmarked, so the merged tree again has exactly one unmarked vertex. A graph loop is never a candidate because $B_i$ is a forest, and the second of two parallel edges cannot be accepted because one endpoint is already marked.
\end{proof}

\begin{theorem}[Graphic guarantee]\label{thm:graphic-appendix}
For every graphic matroid represented by an undirected multigraph whose edge endpoints are revealed on arrival, and every nonnegative monotone submodular objective, the online algorithm above has competitive ratio at most
\[
 C_{\mathrm{gr}}=\frac{546750}{80771}\approx6.769.
\]
It uses $O(n^2r)$ value queries and independence queries.
\end{theorem}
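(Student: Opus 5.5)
The proof follows the template of Theorem~\ref{thm:laminar-appendix}. Only the acceptance-probability lemma and the final numerical optimization change. Feasibility is already given by Lemma~\ref{lem:graphic-feasible-appendix}. For the value, I would reuse verbatim the quantities $X_i$, $T_{>i}$, $d_i=f(T_{>i}\cup X_i)-f(T_{>i})$ and $x_i=\mathbb E[d_i]$. The inequality $f(A)\ge\sum_{i>\ell}\mathbf 1\{e_i\in A\}d_i$ and the recurrence
\[
x_i+\frac1i\sum_{j>i}x_j\ge\frac{\mathbb E[f(B_i)]}{i}
\]
carry over unchanged, since they never used laminarity. Here $\mathbb E[f(B_i)]\ge(\min\{i/N,1/2\}-1/250)f(O)$ by Lemma~\ref{lem:prefix-value-appendix}, which applies because $i>\ell=4N/9$.

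The new ingredient is a conditional acceptance bound replacing Lemma~\ref{lem:laminar-feasible-appendix}. Condition on $H_i$, $e_i$, the future elements, and all seeds, and suppose $e_i\colon u\to v$ is a candidate. Then $e_i$ is rejected only if $u$ or $v$ was marked by an earlier accepted edge. That requires an earlier candidate $e_j$, with $\ell<j<i$, whose head in the canonical orientation of $B_j$ is $u$ or $v$. In a rooted oriented forest every vertex is the head of at most one edge. Hence, in $B_j$, at most two elements of $H_j$ have head in $\{u,v\}$.

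I would reveal the order backwards exactly as in the laminar proof. At reverse step $j$, the set $H_j$ and the forest $B_j$ are fixed and $e_j$ is uniform on $H_j$. The bad event at step $j$ therefore has conditional probability at most $2/j$, independently of the later revelations. This gives
\[
q_i^{\mathrm{gr}}\ge\prod_{j=\ell+1}^{i-1}\Bigl(1-\frac2j\Bigr)=\frac{\ell(\ell-1)}{(i-1)(i-2)}.
\]
This is the Soto--Turkieltaub--Verdugo estimate, and the same multiply-by-conditional-probability step as before gives $\mathbb E[f(A)]\ge\sum_i q_i^{\mathrm{gr}}x_i$.

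To close, I would choose nonnegative coefficients $a_i$ with $a_i+\sum_{\ell<j<i}a_j/j\le q_i^{\mathrm{gr}}$. Multiplying the recurrence by $a_i$ and summing then gives
\[
\mathbb E[f(A)]\ge\sum_i\frac{a_i}{i}\Bigl(\min\Bigl\{\frac iN,\frac12\Bigr\}-\frac1{250}\Bigr)f(O).
\]
The natural choice is to solve the triangular system with equality, $a_i=q_i^{\mathrm{gr}}-\sum_{j<i}a_j/j$. In the continuum limit $t=i/N$ with $\lambda=4/9$, this is the ODE $a(t)+\int_\lambda^t a(\tau)/\tau\,d\tau=\lambda^2/t^2$, whose solution is a combination of $t^{-1}$ and $t^{-2}$. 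One must check that $a$ stays nonnegative on $[\lambda,1]$, or else truncate it at zero, which keeps the inequality valid since $q$ is decreasing. Then one evaluates $\int a(t)\min\{t,1/2\}/t\,dt$.

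The main obstacle is this final step. The discrete sums must be bounded rigorously against their integrals, with errors controlled by $N\ge31250000$ and the $1/250$ slack, and the result must be shown to give exactly the rational $80771/546750$. The split at $t=1/2$ in $\min\{t,1/2\}$ makes the bookkeeping piecewise. I would first compute the continuum constant to confirm that it exceeds $80771/546750$ with room to spare, and then absorb the $O(1/N)$ discretization and $1/250$ losses into that margin using crude monotone integral comparisons, as the laminar proof did with $\log 2<347/500$. The query bound is as in the laminar case: the shared subroutine plus $O(1)$ extra work per arrival.
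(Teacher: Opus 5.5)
There is a genuine gap, and it is quantitative. Your conditional acceptance bound $q_i^{\mathrm{gr}}\ge\ell(\ell-1)/((i-1)(i-2))$ is correct, but the laminar template built on it is too weak, so the plan fails at exactly the step you deferred.

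For any nonnegative $a_i$, your constraint at $i=N$ together with $a_N\ge0$ gives
\[
\sum_{\ell<j\le N}\frac{a_j}{j}\le q_N^{\mathrm{gr}}=\frac{\ell(\ell-1)}{(N-1)(N-2)}\approx\frac{16}{81}.
\]
Since $\min\{i/N,1/2\}-1/250<1/2$, your final bound is at most $\frac12q_N^{\mathrm{gr}}f(O)\approx\frac{8}{81}f(O)\approx0.099f(O)$. That is a ratio of about $10.1$, not $546750/80771\approx6.769$. Dually, the vector with $x_N=f(O)/2$ and $x_i=0$ for $i<N$ satisfies $x\ge0$ and your recurrence with right-hand side $(\min\{i/N,1/2\}-1/250)f(O)/i$. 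It gives $\sum_iq_i^{\mathrm{gr}}x_i=q_N^{\mathrm{gr}}f(O)/2$, so no care with discretization errors can help.

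Changing $\lambda=\ell/N$ does not rescue it either. With $F(t)=\int_\lambda^ta(\tau)/\tau\,d\tau$, your constraint reads $(tF)'\le\lambda^2/t^2$, so $F(1)\le\lambda(1-\lambda)$. Combined with $F(1)\le\lambda^2$, the bound never exceeds $1/8$.

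Your truncation remark is also backwards. The equality solution $a(t)=2\lambda^2/t^2-\lambda/t$ turns negative for $t>2\lambda=8/9$, where $F=1/4=q(8/9)$. Zeroing $a$ there freezes $F$ at $1/4$ while $q(t)=\lambda^2/t^2$ keeps decreasing. So the constraint is violated on $(8/9,1]$ precisely because $q$ is decreasing.

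The loss comes from multiplying a worst-case acceptance probability into marginals $d_i$ measured against \emph{all} future candidates. This lets the linear program put all the value at the last slot, where $q_i^{\mathrm{gr}}$ is smallest. The missing idea is the paper's coupled suffix argument, which measures the damage of blocking against the realized suffix output.

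The paper compares $A$ with an auxiliary process that marks the head of every candidate, accepted or not, so that $A^\circ\subseteq A$. It lets $S_i$ be that process's output when started at slot $i$ with no marks. Because later candidates mark their heads anyway, inserting a candidate $e\colon u\to v$ at slot $i$ turns the suffix output into exactly $\{e\}\cup(S_{i+1}\setminus D_v)$. Here $D_v$ is the set of edges of $S_{i+1}$ incident to $v$. Heads in the rooted forest $B_i$ are distinct, so each edge of $S_{i+1}$ lies in at most two sets $D_v$. A fixed-order marginal decomposition of $f(S_{i+1})$ then yields $\mathbb E[f(S_i)]\ge(1-3/i)\,\mathbb E[f(S_{i+1})]+\mathbb E[f(B_i)]/i$.

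Unrolling with $P_i=\prod_{\ell<j<i}(1-3/j)$ puts total weight $\sum_iP_i/i=(1-P_{N+1})/3\approx0.30$ on the terms $\mathbb E[f(B_i)]$, compared with your cap of about $0.20$. The telescoped sum $\sum_iP_i\min\{i/N,1/2\}/i$ is at least $\lambda/2-5\lambda^3/6=326/2187$ at $\lambda=4/9$. Subtracting the $1/750$ loss from the $1/250$ slack gives exactly $80771/546750$.
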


\begin{proof}
Let $A$ be the output. For analysis, consider a second process with the same candidates and orientations. It accepts a candidate only when both endpoints are unmarked, but marks the head of every candidate, whether or not it is accepted. Let $A^\circ$ be its output.  At every slot, this process has marked all previous candidate heads and hence contains every mark made by the online algorithm. Any candidate it accepts therefore also passes the online algorithm's test. By monotonicity,
\begin{equation}\label{eq:app-gr-coupling}
 A^\circ\subseteq A,
 \qquad f(A)\ge f(A^\circ).
\end{equation}

For $i>\ell$, let $S_i$ be the output of the second process started at slot $i$ with no marked vertices, using the same candidate status and orientation at every subsequent slot. In particular,
\[
 S_{N+1}=\varnothing,
 \qquad S_{\ell+1}=A^\circ.
\]
Condition on $H_i$, all future edges, and all seeds, leaving $e_i$ uniform in $H_i$. This fixes $B_i$ and $S=S_{i+1}$ because future prefix sets do not depend on the order within $H_i$.

If $e_i\notin B_i$, then $S_i=S$. Otherwise, write its orientation as $e:u\to v$ and let $S_i(e)$ denote the resulting output. The process accepts $e$ and marks $v$. Every subsequent candidate marks its head regardless of acceptance, so starting the suffix with $v$ marked changes no other marks and removes exactly the edges of $S$ incident to $v$. Consequently,
\begin{equation}\label{eq:app-gr-suffix}
 S_i(e)=\{e\}\cup(S\setminus D_v),
 \qquad D_v=\{x\in S:x\text{ is incident to }v\}.
\end{equation}

List $S$ in the fixed edge-label order as $x_1,\ldots,x_k$, and define
\[
 c_{x_j}=f(x_j\mid\{x_1,\ldots,x_{j-1}\}).
\]
These marginal values are nonnegative and sum to $f(S)$. For any $D\subseteq S$, submodularity gives
\[
 f(S\setminus D)\ge\sum_{x\in S\setminus D}c_x
 =f(S)-\sum_{x\in D}c_x,
 \qquad
 f(e\mid S\setminus D)\ge f(e\mid S).
\]
Adding these inequalities yields
\begin{equation}\label{eq:app-gr-deletion}
 f((S\setminus D)+e)
 \ge f(S)+f(e\mid S)-\sum_{x\in D}c_x.
\end{equation}

Every vertex is the head of at most one edge in the rooted forest $B_i$. Thus each edge of $S$ belongs to at most two sets $D_v$ indexed by candidate heads, one for each endpoint. The sum of all deletion terms in~\eqref{eq:app-gr-deletion} is therefore at most $2f(S)$. Also,
\[
 \sum_{e\in B_i}f(e\mid S)
 \ge f(B_i\cup S)-f(S)
 \ge f(B_i)-f(S).
\]
Summing~\eqref{eq:app-gr-deletion} over $e\in B_i$ and using~\eqref{eq:app-gr-suffix} gives
\begin{equation}\label{eq:app-gr-total-change}
 \sum_{e\in B_i}\bigl[f(S_i(e))-f(S)\bigr]
 \ge f(B_i)-3f(S).
\end{equation}
Noncandidates contribute zero to this change. Since $e_i$ is uniform in $H_i$, we have, under the stated conditioning,
\[
 \mathbb E[f(S_i)\mid H_i,e_{i+1},\ldots,e_N,(\eta_j)_j]
 \ge\left(1-\frac3i\right)f(S)+\frac{f(B_i)}i.
\]
Writing $y_i=\mathbb E[f(S_i)]$ and taking expectations yields
\begin{equation}\label{eq:app-gr-recurrence}
 y_i\ge\left(1-\frac3i\right)y_{i+1}
       +\frac{\mathbb E[f(B_i)]}{i}.
\end{equation}
Lemma~\ref{lem:prefix-value-appendix} gives
\begin{equation}\label{eq:app-gr-prefix}
 \mathbb E[f(B_i)]
 \ge\left(\min\left\{\frac iN,\frac12\right\}-\frac1{250}\right)f(O).
\end{equation}

For $\ell+1\le i\le N+1$, put
\[
 P_i=\prod_{j=\ell+1}^{i-1}\left(1-\frac3j\right)
 =\frac{\ell(\ell-1)(\ell-2)}{(i-1)(i-2)(i-3)}.
\]
The factors are positive because $\ell>2$, and $P_{\ell+1}=1$. Moreover,
\[
 \frac{P_i}{i}=\frac{P_i-P_{i+1}}3,
 \qquad
 \sum_{i=\ell+1}^{N}\frac{P_i}{i}
 =\frac{1-P_{N+1}}3<\frac13.
\]
Unrolling~\eqref{eq:app-gr-recurrence}, using $y_{N+1}=0$, and substituting~\eqref{eq:app-gr-prefix} gives
\begin{equation}\label{eq:app-gr-unrolled}
 y_{\ell+1}
 \ge\left[
   \sum_{i=\ell+1}^{N}\frac{P_i}{i}
       \min\left\{\frac iN,\frac12\right\}
   -\frac1{750}
 \right]f(O).
\end{equation}

To evaluate the sum, write $h=N/2$ and $L=\ell(\ell-1)(\ell-2)$. Both $h$ and $\ell=4N/9$ are integers because $N$ is divisible by $18$. The identities
\[
 \frac1{(i-1)(i-2)(i-3)}
 =\frac12\left[
   \frac1{(i-2)(i-3)}-\frac1{(i-1)(i-2)}
 \right]
\]
and
\[
 \frac1{i(i-1)(i-2)(i-3)}
 =\frac13\left[
   \frac1{(i-1)(i-2)(i-3)}
   -\frac1{i(i-1)(i-2)}
 \right]
\]
show that
\[
 \sum_{i=\ell+1}^{h}\frac{P_i}{N}
 =\frac{\ell}{2N}-\frac{L}{2N(h-1)(h-2)}
\]
and
\[
 \frac12\sum_{i=h+1}^{N}\frac{P_i}{i}
 =\frac{L}{6}\left[
   \frac1{h(h-1)(h-2)}-\frac1{N(N-1)(N-2)}
 \right].
\]
Adding these expressions and using $h=N/2$ yields
\begin{equation}\label{eq:app-gr-telescoping}
 \sum_{i=\ell+1}^{N}\frac{P_i}{i}
       \min\left\{\frac iN,\frac12\right\}
 =\frac{\ell}{2N}
  -\frac{L}{6N}\left[
    \frac1{(N/2-1)(N/2-2)}
    +\frac1{(N-1)(N-2)}
  \right].
\end{equation}
For $j=1,2$,
\[
 \frac{\ell-j}{N/2-j}\le\frac89,
 \qquad
 \frac{\ell-j}{N-j}\le\frac49.
\]
Thus the subtracted term in~\eqref{eq:app-gr-telescoping} is at most
\[
 \frac{\ell}{6N}
 \left[\left(\frac89\right)^2+\left(\frac49\right)^2\right]
 =\frac56\left(\frac49\right)^3,
\]
and the sum is at least
\[
 \frac29-\frac56\left(\frac49\right)^3
 =\frac{326}{2187}.
\]
Combining this with~\eqref{eq:app-gr-coupling},~\eqref{eq:app-gr-unrolled}, and $S_{\ell+1}=A^\circ$ gives
\[
 \mathbb E[f(A)]
 \ge\left(\frac{326}{2187}-\frac1{750}\right)f(O)
 =\frac{80771}{546750}f(O).
\]
This is the claimed competitive guarantee. The query bound follows from the shared subroutine; rooting the prefix forests and maintaining vertex marks take polynomial time and require no additional oracle calls.
\end{proof}

\end{document}